\definecolor{ptblue}{RGB}{15,76,129} %% PANTONE 19-4052 Classic Blue in Year 2020.
\definecolor{cobalt}{rgb}{0.0, 0.28, 0.67}
\let\OLDforall\forall
\renewcommand{\forall}{\OLDforall\:}
\let\OLDexists\exists
\renewcommand{\exists}{\OLDexists\,}
\newcommand*{\diff}[1]{\mathop{}\!\mathrm{d}#1} %% https://tex.stackexchange.com/questions/60545/should-i-mathrm-the-d-in-my-integrals
\let\oldnl\nl% Store \nl in \oldnl
\newcommand{\nonl}{\renewcommand{\nl}{\let\nl\oldnl}}% Remove line number for one line: https://tex.stackexchange.com/questions/153646/algorithm2e-disabling-line-numbers-for-specific-lines
\theoremstyle{plain}
\newtheorem{theorem}{Theorem}[section]
\newtheorem{proposition}[theorem]{Proposition}
\newtheorem{lemma}[theorem]{Lemma}
\newtheorem{claim}[theorem]{Claim}
\theoremstyle{definition}
\newtheorem{definition}[theorem]{Definition}
\newtheorem{remark}[theorem]{Remark}
\newcommand{\A}{\mathcal{A}}
\newcommand{\Inj}{\mathrm{Inj}}
\newcommand{\val}{\mathrm{val}}
\newcommand{\EF}[1]{\if\relax\detokenize\expandafter{\@firstofone#1{}}\relax \text{EF}\xspace\else \text{EF#1}\fi}
\newcommand{\EFone}{\EF{1}\xspace}
\newcommand{\EFX}{\EF{X}\xspace}
\newcommand{\EFM}{\EF{M}\xspace}
\newcommand{\EFable}{\text{envy-freeable}\xspace}
\newcommand{\EFability}{envy-freeability\xspace}
\newcommand{\goodMinimal}{good-minimal\xspace}
\newcommand{\goodMinimality}{good-minimality\xspace}
\newcommand{\choreMaximal}{chore-maximal\xspace}
\newcommand{\choreMaximality}{chore-maximality\xspace}
\newcommand{\IMWM}{\texttt{\textup{IMWM}}\xspace} %% Iterative Maximum-Weight Matching
\newcommand{\IMWPM}{\texttt{\textup{IMWPM}}\xspace} %% Iterative Maximum-Weight Perfect Matching
\title{Fair Division with Indivisible Goods, Chores, and Cake}
\author{Haris Aziz\thanks{\nolinkurl{haris.aziz@unsw.edu.au}} \qquad
Xinhang Lu\thanks{\nolinkurl{xinhang.lu@unsw.edu.au}} \qquad
Simon Mackenzie\thanks{\nolinkurl{simon.william.mackenzie@gmail.com}} \qquad
Mashbat Suzuki\thanks{\nolinkurl{mashbat.suzuki@unsw.edu.au}} \bigskip\\
UNSW Sydney, Australia}
\date{}
\begin{document}
\maketitle

\begin{abstract}
We study the problem of fairly allocating indivisible items and a desirable heterogeneous divisible good (i.e., cake) to agents with additive utilities.
In our paper, each indivisible item can be a good that yields non-negative utilities to some agents and a chore that yields negative utilities to the other agents.
Given a fixed set of divisible and indivisible resources, we investigate almost envy-free allocations, captured by the natural fairness concept of \emph{envy-freeness for mixed resources (EFM)}.
It requires that an agent~$i$ does not envy another agent~$j$ if agent~$j$'s bundle contains any piece of cake yielding positive utility to agent~$i$ (i.e., envy-freeness), and agent~$i$ is envy-free up to one item (EF1) towards agent~$j$ otherwise.
We prove that with indivisible items and a cake, an EFM allocation always exists for any number of agents with additive utilities.
\end{abstract}

\section{Introduction}

Fair division studies the fundamental problem of allocating valuable scarce resources or undesirable burdens among agents with possibly differing preferences in a fair manner.
The problem was first mathematically formulated in the seminal work of \citet{Steinhaus49}, which addresses the fair division of a heterogeneous divisible resource, commonly known as \emph{cake cutting}.
Since then, fair division has attracted ongoing interest in mathematics, political science, and economics, and most recently has received extensive attention in computer science~\citep[e.g.,][and references therein]{BabaioffFe25,BabichenkoFeHo24,BarmanSu26,BuTa25}.

A thorough understanding of fairness has been described as a key to ``addressing some of the greatest challenges of the 21st century''~\citep{Procaccia13}.
Among the plethora of fairness criteria proposed in the literature, \emph{envy-freeness} has been regarded as the ``gold standard'' of fairness in allocation problems~\citep[see, e.g.,][]{BenadeKaPr24}.
It requires that each agent prefers to keep their own allocation to swapping with any other agent.
An envy-free cake division always exists~\citep{Alon87,Su99}, and can be found via a discrete and bounded protocol~\citep{AzizMa16-STOC,AzizMa16} under the query model of \citet{RobertsonWe98}.

Looking beyond cake cutting, the allocation of heterogeneous \emph{indivisible} items is of practical importance and has attracted considerable attention over the past decades, especially in the context of allocating goods (items with non-negative utilities)~\citep[see, e.g., survey articles][]{AmanatidisAzBi23,NguyenRo23,Suksompong21,Suksompong25}.
More recently, due to its broad applicability, the fair allocation of chores (items that yield negative utilities) has also received significant attention~\citep[see, e.g., the survey article][]{GuoLiDe23}.

There has been a concerted effort to understand whether strong and compelling fairness guarantees can be achieved in the more general setting that includes both goods and chores~\citep{AMS20a,AzizCaIg22,BarmanHVSe25,BogomolnaiaMoSa17,ChaudhuryGaMc23,HosseiniSiVa23,KMT21a,KMT21b,LiuLuSu24}.
In order to circumvent the issue that an envy-free allocation may not exist with indivisible items, envy-freeness is often relaxed to \emph{envy-freeness up to one item (\EFone)}~\citep{AzizCaIg22,Budish11,LiptonMaMo04}, a highly influential, robust, and widely-studied concept.
\EFone requires that any envy an agent has towards another vanishes if we ignore some chore in the former agent's bundle or some good in the latter agent's bundle.
An \EFone allocation of mixed indivisible goods and chores always exists and can be computed efficiently for agents with additive or even doubly monotone utilities~\citep{AzizCaIg22,BhaskarSrVa21}.

What if we are presented with both indivisible items and a cake?
This model of mixed divisible and indivisible resources naturally captures real-world applications such as rent division, divorce settlement, inheritance division, and any other situation involving both assets and liabilities to be divided between multiple entities.
A compelling relaxation of envy-freeness in the mixed-resources model is \emph{envy-freeness for mixed resources (\EFM)}~\citep{BeiLiLi21,BhaskarSrVa21}, which simultaneously combines the ideas of envy-freeness and \EFone in a meaningful and natural way.
Phrased in our setting, \EFM requires that if an agent~$i$ envies another agent~$j$, then agent~$j$ should not get any sliver of the divisible resource for which agent~$i$ has positive utility, and moreover, agent~$i$ should be \EFone towards agent~$j$.
When all items are divisible, \EFM coincides with envy-freeness; when the items are all indivisible, \EFM coincides with \EFone.
With indivisible goods and a cake, \citet{BeiLiLi21} showed that an \EFM allocation always exists for any number of agents with additive utilities.

\citet{BhaskarSrVa21} revisited the work of \citet{BeiLiLi21} and examined the existence of \EFM allocations in the presence of indivisible chores and a cake.
\citeauthor{BhaskarSrVa21} were only able to establish the existence of \EFM allocations in two special cases: when agents have identical rankings over indivisible chores, and when the number of chores exceeds the number of agents by at most one.
\citet[pp.~15]{BhaskarSrVa21} discussed in detail the technical challenges concerning the problem and emphasized that the solution will require fundamentally novel techniques.
The problem has again been highlighted in the survey paper by \citet[Open Question~7]{LiuLuSu24}.
The fundamental question that we explore in this paper and which has been raised in previous work~\citep{BhaskarSrVa21,LiuLuSu24,MPS23a} is as follows.
\begin{quote}
\itshape
With indivisible items and a cake, does an \EFM allocation always exist?
\end{quote}

\subsection{Our Contribution}

We settle the above question in the affirmative by establishing the existence of \EFM allocations in full generality for the case where there is a cake and $m$ indivisible items (that may be subjective goods or chores) to be divided among $n$ agents who have additive utilities.

\medskip
\noindent\textbf{Main Result~1 (\Cref{thm:EFM-existence}).}
For indivisible items and cake, an \EFM allocation always exists.
\medskip

En route, we uncover an interesting connection between solution concepts of envy-freeness, \emph{\EFability}, \EFone, and \EFM, where \EFability is a concept developed and investigated in the closely related but separate line of research on fair division with subsidy~\citep[e.g.][]{BrustleDiNa20,HalpernSh19,WuZh24}.
An allocation of indivisible items is said to be \emph{\EFable} if the allocation can be made envy-free by allocating a sufficient amount of money to each of the agents~\citep{HalpernSh19}.
With only indivisible goods, \citet{BrustleDiNa20} showed that an \EFone and \EFable allocation always exists.

We reduce the problem of showing the existence of an \EFM allocation given any set of indivisible items and any cake to the problem of proving the existence of an \EFone and \EFable allocation of the indivisible items.
We establish the following result, extending the result of \citet{BrustleDiNa20} to the setting that includes both goods and chores.

\medskip
\noindent\textbf{Main Result~2 (\Cref{thm:EF1+EFable-existence}).}
With indivisible goods and chores, an \EFone and \EFable allocation always exists.
\medskip

Our results significantly and simultaneously generalize several fundamental results in the literature, including the following:
\begin{itemize}
\item For cake, an envy-free allocation is guaranteed to exist~\citep{Alon87,Su99}.
\item For indivisible goods, an \EFone allocation exists~\citep{LiptonMaMo04,CaragiannisKuMo19}.
\item For indivisible goods and chores, an \EFone allocation exists~\citep{AzizCaIg22,BhaskarSrVa21}.
\item For indivisible goods, an \EFone and \EFable allocation exists~\citep{BrustleDiNa20}.
\item For indivisible goods and cake, an \EFM allocation exists~\citep{BeiLiLi21}.
\end{itemize}

\subsection{Technical Challenges and Overview}

As \citet{BhaskarSrVa21} pointed out, existing methods for obtaining \EFM allocations fail in the setting that includes a mix of indivisible goods and chores due to the necessity of resolving \emph{any} envy cycle, which may violate the \EFone property needed for \EFM.
We circumvent this issue by finding a fixed allocation of the indivisible items that will not be altered when we allocate the divisible good, while still achieving \EFM.
To achieve this, we first observe that the problem of allocating a heterogeneous cake and indivisible items is reducible to the problem of allocating indivisible items and money by showing that one can simulate money via cake.
Hence, to achieve \EFM, it suffices to find an indivisible-items allocation that is \EFone and can be made envy-free when there is a sufficient amount of money (i.e., an \EFable allocation).
We establish this in \Cref{sec:cake-to-money}.

Having reduced the problem of showing the existence of \EFM to finding a fixed allocation of the indivisible items that is both \EFone and \EFable, we face the challenge of identifying such allocations, which presents its own technical difficulties.
When there are only indivisible goods, \citet{BrustleDiNa20} showed that such allocations exist.
However, their algorithm violates \EFone in the setting with both goods and chores.
On the other hand, all existing methods for obtaining \EFone allocations for goods and chores (such as those presented in \citep{AzizCaIg22,BhaskarSrVa21}) violate \EFability.
We overcome this by carefully bundling items together, treating them as single \emph{meta-items}, and then running an iterative matching algorithm similar to that of \citet{BrustleDiNa20} in a way that yields both \EFone and \EFability.

The key technical contribution of our paper is to show that, when there are both goods and chores, among all configurations of bundling items together, there exists one that yields an \EFone and \EFable allocation when running iterative matching.
We achieve this by first starting with an initial bundling of the items (\Cref{alg:itemmerge}), which outputs a set of meta-items, each of which is considered good by some agent, along with the remaining individual objective chores.
We then branch into two cases depending on the number of the remaining individual chores: (i) the number of objective chores is at least~$n$ (see \Cref{sec:EFM:>=n-chores}), and (ii) the number of objective chores is less than~$n$ (see \Cref{sec:EFM:<n-chores}).

In the first case, each meta-good can be combined such that the resulting instance consists solely of objective chores.
This combining of objective chores and meta-goods is done in a way that lexicographically maximizes the vector of matching values achieved when implementing iterative maximum-weight perfect matching over prospective combinations (bundling) of items.
We prove that this ensures \EFone and \EFability (\Cref{prop:EF1+EFable:>=n-chores}).

In the second case, we further refine the bundling (\Cref{alg:chore-max-good-min}), which may partially undo the initial bundling, and consider different possible configurations of combining the resulting meta-items.
This case fundamentally differs from the first one, as some agents may receive a meta-good.
We prove that there exists a bundling that cleanly separates the agents, ensuring that if an agent receives a meta-good, then every other agent who receives a meta-chore perceives that agent’s bundle as a meta-chore as well.
We then run two slightly different iterative matching algorithms on the bundled items for these two disjoint sets of agents, in a way that ensures the resulting allocation is \EFone and \EFable (\Cref{prop:EF1+EFable:<n-chores:>0}).

\subsection{Related Work}

The fair division literature is vast and can be divided along at least two orthogonal directions: (i) goods versus chores and (ii) indivisible versus divisible items.
We refer interested readers to the survey articles of \citet{AmanatidisAzBi23}, \citet{GuoLiDe23}, \citet{NguyenRo23}, and \citet{Suksompong21,Suksompong25} for an overview of the recent developments of the fair allocation of indivisible goods or indivisible chores.
For settings with mixed indivisible goods and chores, or with mixed divisible and indivisible resources, we refer to the recent survey article of \citet{LiuLuSu24} for an overview.

Our paper is mostly inspired by the work of \citet{BeiLiLi21}, \citet{BhaskarSrVa21}, and \citet{BrustleDiNa20}.
We will discuss in detail the recent developments and progress in related lines of research.

\medskip
\noindent\textbf{Fair division of mixed resources.}
This line of research has been focused on providing axiomatic study of fairness when allocating a mix of divisible and indivisible resources (i.e., \emph{mixed-resources}).
\citet{BeiLiLi21} were the first to study an envy-freeness relaxation in the setting with only goods and introduced a version of \EFM that is artificially stronger than our \EFM definition.\footnote{\citet[Definition~2.3]{BeiLiLi21} requires envy-freeness from an agent~$i$ to another agent~$j$ as long as agent~$j$'s bundle contains a positive amount of divisible goods, even in the case that agent~$i$ values the divisible goods in~$j$'s bundle at~$0$.
It is also known that the definition of \citeauthor{BeiLiLi21} is incompatible with Pareto optimality, a fundamental economic efficiency notion in the field.
Our \EFM definition is arguably more natural.}
Subsequent work has delved further into understanding \EFM by investigating the design and limitations of \emph{truthful} and \EFM mechanisms in the presence of strategic agents~\citep{LiLiLu23}, quantitatively measuring the social welfare loss resulting from enforcing \EFM through the lens of \emph{price of fairness}~\citep{LiLiLu24}, exploring how to simultaneously satisfy exact fairness (e.g., envy-freeness and proportionality) ex ante while ensuing \EFM ex post~\citep{BuLiLi24}.
Other fairness notions, including \emph{maximin share (MMS) guarantee}~\citep{BeiLiLu21}, relaxations of envy-freeness and proportionality based on \emph{indivisibility ratio}~\citep{LiLiLi24}, and a strengthening of \EFM to EFXM~\citep{NishimuraSu25}, have also been studied in the mixed-goods model.

\citet{BhaskarSrVa21} generalized the mixed-goods model and the \EFM notion further to the mixed-resources model in which items could be goods or chores.
In the setting with \emph{doubly monotone} indivisible items and a divisible chore,\footnote{\label{ft:doubly-monotone}Doubly monotone indivisible items mean that each agent can partition the set of items into objective goods and objective chores.} \citeauthor{BhaskarSrVa21} showed an \EFM allocation always exists.

Slightly further afield, \citet{AzizHeLu25} and \citet{BeiLiLu25} relaxed the assumption that agents agree on whether each resource is divisible or indivisible.

\medskip
\noindent\textbf{Fair division with subsidy.}
The study of fair allocation with monetary transfers is well-established, particularly in the context of room-rent division. Most of the literature in this area assumes that each agent demands exactly one item or room~\citep{Arag95a,GMPZ17a,Klij00a,Sven83a}.
More general frameworks, where envy-freeness is achieved through side-payments, have been explored~\citep{HRS02a}.
More recently, research has focused on computing envy-free allocations when agents may demand multiple items and monetary transfers are permitted~\citep{HRS02a}.
In particular, \citet{HalpernSh19} brought attention to the problem of determining allocations that achieve envy-freeness through minimal subsidies.
Subsequent work has delved deeper into the computational aspects of this problem, examining both exactly minimal subsidies~\citep{BrustleDiNa20,KawaseMaSu25} and approximately minimal subsidies~\citep{CaragiannisIo21}, and moreover, the strategic aspects of the problem, examining fair and truthful mechanisms with limited subsidy~\citep{GokoIgKa24}.

\section{Preliminaries}
\label{sec:prelim}

We study fair allocation of indivisible items (consisting of both goods and chores) and a divisible good among a set of agents $N = \{1, \dots, n\}$.
For any positive integer~$k$, let $[k] \coloneqq \{1, 2, \dots, k\}$.

\medskip
\noindent\textbf{Indivisible Items.}
Denote by $M = [m]$ the set of \emph{indivisible items}.
Each agent~$i \in N$ has a utility function $u_i \colon 2^M \to \mathbb{R}$ that assigns a real value to each subset of items, with  $u_i(\emptyset) = 0$ for all~$i \in N$.
Agents' utilities are assumed to be \emph{additive}, i.e., $u_i(S) = \sum_{t \in S} u_i(t)$ for each bundle~$S \subseteq M$.
For convenience, we write $u_i(t)$ to denote $u_i(\{t\})$ for a single item~$t \in M$.

An item~$t \in M$ can be a \emph{good} for one agent~$i$ (i.e., $u_i(t) \geq 0$) but a \emph{chore} for another agent~$j$ (i.e., $u_j(t) < 0$).
An item~$t$ is said to be a \emph{subjective good} if $u_i(t) \geq 0$ for some~$i \in N$.
An item~$t$ is said to be an \emph{objective good} (resp., \emph{objective chore}) if $u_i(t) \geq 0$ (resp., $u_i(t) < 0$) for all agents~$i \in N$.

We refer to a setting with agents~$N$, indivisible items~$M$, and utilities~$(u_i)_{i \in N}$ as an \emph{indivisible-items instance} (henceforth, \emph{instance}), denoted as $\langle N, M, (u_i)_{i \in N} \rangle$.

\medskip
\noindent\textbf{Divisible Resource.}
We consider two types of divisible resources: cake (a heterogeneous divisible resource) and money (a homogeneous divisible resource).
We now formally define each of them.

A \emph{cake}~$C$ is a heterogeneous divisible good represented by the unit interval $C = [0, 1]$.
Each agent~$i \in N$ has a non-atomic \emph{density function} $f_i \colon [0, 1] \to \mathbb{R}_{\geq 0}$ that describes their preferences over the cake.
For any measurable subset~$X \subseteq [0, 1]$, $i$'s utility for~$X$ is given by $u_i(X) = \int_{x \in X} f_i(x) \, \diff x$.

A \emph{money} is a homogeneous divisible resource.
If agent~$i \in N$ receives $p_i \geq 0$ units of money, their utility for this money is $u_i(p_i) = p_i$.
That is, all agents value money identically at one unit of utility per unit of money.
Note that money can be seen as a special case of cake, where each agent's density function is identically one.

\medskip
\noindent\textbf{Allocations.}
An \emph{indivisible-items allocation} $A = (A_1, \dots, A_n)$ is a partition of~$M$ such that $A_i \cap A_j = \emptyset$ for all $i \neq j$ and $\bigcup_{i \in N} A_i = M$.
Each agent~$i$ receives the bundle~$A_i$, which could be empty.

An allocation of mixed divisible and indivisible resources is a tuple $\mathcal{A} = \{(A_i, C_i)\}_{i \in N}$ where $(A_1, \dots, A_n)$ is a partition of~$M$ and $(C_1, \dots, C_n)$ is a partition of~$[0, 1]$ into measurable sets.
Agent~$i$ receives indivisible items~$A_i$ and a piece of cake~$C_i$, with total utility $u_i(A_i, C_i) = u_i(A_i) + u_i(C_i)$.

When the divisible resource is money, we denote allocation of indivisible items and money as $\mathcal{A} = \{(A_i, p_i)\}_{i \in N}$, where agent~$i$ receives indivisible items~$A_i$ and $p_i \geq 0$ units of money.

\subsection{Fairness Notions}

We begin with classical fairness notions before introducing our main concepts.

\begin{definition}[Envy-freeness]
With a mix of indivisible items~$M$ and a cake~$C$, an allocation $\{(A_i, C_i)\}_{i \in N}$ is said to be \emph{envy-free} if for every pair of agents~$i, j \in N$, $u_i(A_i, C_i) \geq u_i(A_j, C_j)$.
\end{definition}

When the set of indivisible items is empty and we only divide cake~$C$, the problem is known as \emph{cake cutting}.
An envy-free cake cutting always exists~\citep{Alon87,Su99}.
With only indivisible items, however, envy-freeness is a demanding property and cannot always be satisfied.
It is thus often relaxed to \emph{envy-freeness up to one item (\EFone)}.
An \EFone allocation of indivisible items always exists~\citep{AzizCaIg22,BhaskarSrVa21}.

\begin{definition}[\EFone]
An indivisible-items allocation~$(A_i)_{i \in N}$ is said to be \emph{envy-free up to one item (\EFone)} if for all~$i, j \in N$, there exists an item $t \in A_i \cup A_j$ such that $u_i(A_i \setminus \{t\}) \geq u_i(A_j \setminus \{t\})$.
\end{definition}

\paragraph{Fair Division of Mixed Divisible and Indivisible Resources.}
We are now ready to introduce our central fairness concept, which works for a setting with both divisible and indivisible resources and combines envy-freeness and \EFone naturally.

\begin{definition}[\EFM]
\label{def:EFM}
With indivisible items~$M$ and cake~$C$, an allocation $\{(A_i, C_i)\}_{i \in N}$ is said to be \emph{envy-free for mixed resources (\EFM)} if for every pair of agents~$i, j \in N$, either agent~$i$ is envy-free toward agent~$j$, i.e., $u_i(A_i) + u_i(C_i) \geq u_i(A_j) + u_i(C_j)$, or both of the following conditions hold:
\begin{itemize}
\item Agent~$i$ does not value the cake allocated to agent~$j$, i.e., $u_i(C_j) = 0$; and
\item There exists an item $t \in A_i \cup A_j$ such that $u_i(A_i \setminus \{t\})  \geq u_i(A_j \setminus \{t\}) $.
\end{itemize}
\end{definition}

\begin{remark}
In the second bullet point of \Cref{def:EFM}, we could also include agent~$i$'s utility over her allocated divisible good in the definition; ``weak \EFM'' of \citet[Definition~6.4]{BeiLiLi21} is defined in this way.
However, our \Cref{def:EFM} is slightly stronger and can always be satisfied.
\end{remark}

Since \EFM is a comparison-based fairness notion, it is scale-invariant: multiplying an agent's utility function by any positive constant uniformly across both divisible and indivisible items does not affect \EFM.
Hence, without loss of generality, by applying such a uniform scaling, we may assume that for all agents, either $u_i(C) = 1$ or $u_i(C) = 0$.
Throughout the paper, we adopt this normalization when studying the existence of \EFM allocations.

When the \EFM definition is applied to a homogeneous cake (money), an allocation of indivisible items and money, $\mathcal{A} = \{(A_i, p_i)\}_{i \in N}$, is said to be \EFM if, for every pair of agents~$i, j \in N$, either agent~$i$ is envy-free toward agent~$j$ (i.e., $u_i(A_i) + p_i \geq u_i(A_j) + p_j$), or both of the following conditions hold: (i) agent~$j$ does not receive any money, i.e., $p_j = 0$; and (ii) there exists an item $t \in A_i \cup A_j$ such that $u_i(A_i \setminus \{t\}) \geq u_i(A_j \setminus \{t\})$.

One might wonder if we can further strengthen \EFM by replacing the \EFone criterion with the stronger \EFX criterion; see the corresponding definition below.

\begin{definition}[EFXM]
\label{def:EFXM}
With indivisible items~$M$ and cake~$C$, an allocation $\{(A_i, C_i)\}_{i \in N}$ is said to be \emph{envy-free up to any item for mixed resources (EFXM)} if for every pair of agents~$i, j \in N$, either agent~$i$ is envy-free toward agent~$j$, i.e., $u_i(A_i) + u_i(C_i) \geq u_i(A_j) + u_i(C_j)$, or all of the following conditions hold:
\begin{itemize}
\item There is no $C' \subseteq C_j$ such that $u_i(C') > 0$; and
% \item There is no $C' \subseteq C_i$ such that $u_i(C') < 0$; and
\item For each $t \in A_i \cup A_j$ such that $u_i(t) \neq 0$, it is the case that $u_i(A_i \setminus \{t\}) \geq u_i(A_j \setminus \{t\})$.
\end{itemize}
\end{definition}

It would be interesting to explore to what extent an allocation satisfying EFXM is guaranteed to exist.
For instance, with both divisible and indivisible goods, \citet{NishimuraSu25} showed that for binary linear utilities, a maximum Nash welfare allocation satisfies EFXM.

In our setting with both goods and chores, however, \citet[Theorem~2]{HosseiniSiVa23} showed that an \EFX allocation may not exist, even with objective goods and chores and under lexicographic preferences---a subclass of additive utilities.
When there is no cake, EFXM coincides with EFX; hence, by the result of \citeauthor{HosseiniSiVa23}, we see that EFXM is not guaranteed to exist.
In this sense, \EFM is the strongest known concept for which a general existence result holds for agents with additive utilities in the presence of indivisible goods and chores.

\paragraph{Fair Division of Indivisible Items with Subsidy.}
\citet{HalpernSh19} formalized the concept of \emph{\EFability} to capture allocations of indivisible goods that can be made envy-free through monetary compensation from a third-party.
While their definition and characterization of \EFability were framed in the setting with only indivisible goods, they can be applied straightforwardly to the setting with mixed indivisible goods and chores.

\begin{definition}[Envy-freeability]
An indivisible-items allocation~$A = (A_1, \dots, A_n)$ is said to be \emph{envy-freeable} if there exists vector of payments~$(p_1, \dots, p_n)$ such that $\{(A_i, p_i)\}_{i \in N}$ is envy-free.
\end{definition}

\citet{HalpernSh19} gave a useful characterization of \EFability using the envy graph.
Given an indivisible-items allocation $A = (A_i)_{i \in N}$ of~$M$, its \emph{envy graph}~$\mathcal{G}_A$ is the complete weighted directed graph in which each agent is a node, and each edge~$(i, j)$ has weight $w_A(i, j) \coloneqq u_i(A_j)- u_i(A_i)$, i.e., the amount of envy that agent~$i$ has for agent~$j$.
The weight of any (directed) path~$P$ is defined as $w_A(P) = \sum_{(i, j) \in P} w_A(i, j)$, i.e., the sum of the weights of the edges along the path.

\begin{theorem}[\citet{HalpernSh19}]
\label{thm:hs_ef}
For any allocation $A = (A_1, \dots, A_n)$ of indivisible items, the following statements are equivalent:
\begin{enumerate}[label=(\roman*)]
\item Allocation~$A$ is envy-freeable.
\item Allocation~$A$ maximizes the social welfare across all reassignments of its bundles among the agents.
That is, for every permutation~$\sigma$ over~$N$, we have $\sum_{i = 1}^n u_i(A_i) \geq \sum_{i = 1}^n u_i(A_{\sigma(i)})$.
\item The envy graph~$\mathcal{G}_A$ has no positive-weight cycles.
\end{enumerate}
\end{theorem}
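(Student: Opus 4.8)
The plan is to route all three equivalences through statement~(iii), the no-positive-weight-cycle condition, which is structurally the most convenient to reason from. Concretely, I would establish (i)~$\Rightarrow$~(iii), (iii)~$\Rightarrow$~(i), and (ii)~$\Leftrightarrow$~(iii); together these yield the full equivalence. The unifying observation is that envy-freeability, optimality of the reassignment, and cycle weights can all be expressed through the single family of quantities $w_A(i,j) = u_i(A_j) - u_i(A_i)$.

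For (i)~$\Rightarrow$~(iii), suppose $A$ is envy-freeable with payments $(p_1, \dots, p_n)$. Rewriting the envy-free inequality $u_i(A_i) + p_i \ge u_i(A_j) + p_j$ gives $p_i - p_j \ge w_A(i,j)$ for every ordered pair. For any directed cycle $i_1 \to i_2 \to \cdots \to i_k \to i_1$, I would sum these inequalities along the cycle; the left-hand sides telescope to $0$, leaving $0 \ge w_A(\text{cycle})$. Hence no cycle can have positive weight.

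The converse (iii)~$\Rightarrow$~(i) is the heart of the argument and, I expect, the main obstacle, since it requires exhibiting an explicit payment vector from a purely combinatorial hypothesis. The idea is to define a potential $p_i$ as the maximum weight over all directed paths in $\mathcal{G}_A$ that start at node~$i$, where the empty path is allowed and has weight~$0$. The crucial point to verify carefully is that this maximum is finite and attained: any walk decomposes into a simple path together with some cycles, and since (iii) forbids positive-weight cycles, deleting those cycles never decreases the weight, so the supremum over all walks equals the maximum over the finitely many simple paths. This also shows $p_i \ge 0$ automatically, because the empty path is always available, so the payments are nonnegative as the definition requires. To check envy-freeness, fix an edge $(i,j)$ and let $Q$ be a maximum-weight path from~$j$; prepending the edge $(i,j)$ yields a walk from~$i$ of weight $w_A(i,j) + p_j$, which is at most $p_i$. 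Thus $p_i - p_j \ge w_A(i,j)$, equivalently $u_i(A_i) + p_i \ge u_i(A_j) + p_j$, so $\{(A_i, p_i)\}_{i \in N}$ is envy-free.

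Finally, for (ii)~$\Leftrightarrow$~(iii), I would use the cycle decomposition of permutations. For any permutation~$\sigma$, the welfare change satisfies $\sum_{i} u_i(A_{\sigma(i)}) - \sum_i u_i(A_i) = \sum_i w_A(i, \sigma(i))$, and the edges $\{(i, \sigma(i))\}$ form precisely the disjoint directed cycles of~$\sigma$, so this quantity equals the total weight of those cycles. If (iii) holds, every such cycle has nonpositive weight, so no reassignment can raise the social welfare, giving~(ii). Conversely, if some cycle~$C$ had positive weight, taking~$\sigma$ to act as~$C$ on its vertices and as the identity elsewhere would strictly increase the welfare, contradicting~(ii); hence (ii)~$\Rightarrow$~(iii).
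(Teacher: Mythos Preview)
Your argument is correct and is essentially the standard proof of this characterization (the one given by \citet{HalpernSh19} themselves). Note, however, that the present paper does not supply its own proof of this theorem: it is stated with attribution to \citet{HalpernSh19} and used as a black box throughout, so there is no ``paper's own proof'' to compare against. Your write-up would serve perfectly well as a self-contained proof if one were desired.
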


Note that an \EFable allocation need not be fair (e.g., \EFone).
It only requires the indivisible-items allocation combined with enough subsidy payments to be envy-free.
Notably, with only indivisible goods, \citet{BrustleDiNa20} show that an \EFone and \EFable allocation always exists.
We establish below an analogous result in the setting with only indivisible chores.
The proof of the following \Cref{prop:ind-chores:EF1+EFable}, along with all other omitted proofs, can be found in the appendices.

\begin{proposition}
\label{prop:ind-chores:EF1+EFable}
With only indivisible chores, an \EFone and \EFable allocation always exists.
\end{proposition}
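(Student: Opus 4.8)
The plan is to construct the allocation incrementally, placing the objective chores one at a time while maintaining the invariant that the current partial allocation is simultaneously \EFone and \EFable; the proposition then follows once every chore has been placed. (Here ``only chores'' means every item is an objective chore, so $u_i(t) \le 0$ for all $i \in N$ and all items $t$.) The empty allocation is trivially \EFone and \EFable, which handles the base case. For the inductive step, suppose $A = (A_1, \dots, A_n)$ is an \EFone and \EFable allocation of the chores placed so far, and let $c$ be the next chore to be placed.

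The first key step supplies an agent who can absorb $c$ without harming \EFone. Since $A$ is \EFable, \Cref{thm:hs_ef}(iii) guarantees that the envy graph $\mathcal{G}_A$ has no positive-weight cycle; hence the subgraph consisting of the strictly-positive-weight edges is acyclic, so it has a sink, i.e., an agent $i^*$ who strictly envies no one, meaning $u_{i^*}(A_{i^*}) \ge u_{i^*}(A_j)$ for every $j$. I would hand $c$ to this agent, forming $A'$ with $A'_{i^*} = A_{i^*} \cup \{c\}$ and $A'_j = A_j$ otherwise, and then verify \EFone pair by pair. Ordered pairs avoiding $i^*$ are unchanged. For a pair $(i, i^*)$ the evaluated bundle only worsened for $i$ (as $u_i(c) \le 0$), so the witnessing item from $A$ still certifies \EFone. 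For a pair $(i^*, j)$, deleting the newly added chore $c$ from $i^*$'s own bundle recovers $u_{i^*}(A_{i^*}) \ge u_{i^*}(A_j)$, precisely because $i^*$ was non-envious. Thus $A'$ is \EFone.

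The remaining difficulty, which I expect to be the main obstacle, is that $A'$ need not be \EFable. Adding a chore to $i^*$ only worsens the edges leaving $i^*$ and improves those entering it, so every newly created positive-weight cycle must route through $i^*$; I would restore \EFability by reassigning the bundles of $A'$ to the agents through a maximum-weight perfect matching, which is \EFable by \Cref{thm:hs_ef}(ii). The subtlety is that a maximum-weight reassignment of a fixed partition need not be \EFone even when \emph{some} assignment of that same partition is---a two-agent instance with bundles $\{a\}$ and $\{b_1, b_2\}$ already exhibits this. Consequently, partition-level information alone cannot justify the re-matching, and the argument must exploit the full incremental invariant, namely that $A$ itself (not merely its partition) was \EFone and \EFable.

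The hard part will therefore be to prove that the welfare-maximizing re-matching preserves \EFone, given that the preceding allocation was \EFone and \EFable and that $A'$ arose from it by the controlled operation above. I anticipate that the bad partitions of the two-agent cautionary example are simply unreachable by this procedure, so that the localized violation introduced at $i^*$ can be undone by a welfare-improving cyclic exchange of bundles that keeps the allocation \EFone at every stage. Establishing that such an exchange always exists and never breaks \EFone is the technical heart of the argument, and is exactly where the chores setting diverges from the goods-only algorithm of \citet{BrustleDiNa20}, whose re-matching step is \EFone-safe for goods but not for chores.
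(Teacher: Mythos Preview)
Your proposal has a genuine gap that you yourself name but do not close. You correctly argue that handing the new chore $c$ to a non-envious agent $i^*$ preserves \EFone, and you correctly observe that the resulting $A'$ need not be \EFable. But your repair step---re-matching the bundles of $A'$ by a maximum-weight perfect matching---is only \emph{asserted} to preserve \EFone; you write that this is ``the technical heart of the argument'' and that you ``anticipate'' it works. That is a statement of what remains to be proved, not a proof. Your own two-agent example already shows that maximum-weight reassignment of a fixed partition can destroy \EFone, and you give no structural property of the partitions reachable by your procedure that would rule this out. Without that lemma the induction does not close: after re-matching you have an \EFable allocation that is not known to be \EFone, so the invariant is lost.

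The paper avoids this difficulty entirely by taking a different, non-incremental route. It pads the chore set with zero-valued dummies so that $|M| = T \cdot n$, then runs $T$ rounds of maximum-weight \emph{perfect} matching, assigning each agent exactly one (meta-)chore per round. \EFability is immediate round by round: each $\mu^t$ is welfare-maximal among matchings, so its envy graph has no positive-weight cycle, and by additivity these inequalities sum across rounds. \EFone follows from a single monotonicity observation: because $\mu^t$ is maximum-weight, $u_i(\mu_i^t) \ge u_i(h)$ for every $h$ still available in round $t{+}1$, in particular $u_i(\mu_i^t) \ge u_i(\mu_j^{t+1})$. Summing over $t = 1, \dots, T-1$ and using $u_i(\mu_j^1) \le 0$ gives $u_i(A_i \setminus \{\mu_i^T\}) \ge u_i(A_j)$, so removing agent $i$'s \emph{last-round} chore certifies \EFone toward every $j$. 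There is no re-matching step and no interaction between the two properties to control.
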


At a high level, the above result follows from first padding the instance with dummy chores so that the number of items is a multiple of~$n$, and then running an iterative maximum-weight perfect matching on an appropriately defined graph.
This graph has the property that each partial allocation obtained in each round is \EFable.
Due to the additivity of the utilities, when adding up the partial allocations across all rounds, \EFability is preserved.
By the nature of maximum-weight perfect matching, each agent prefers the item they are given in the current round to any item available in subsequent rounds.
Finally, \EFone follows since each agent may remove their chore from the last round to achieve envy-freeness towards any other agent.

\begin{toappendix}
\begin{algorithm}[t]
\caption{Iterative Perfect Matching for Indivisible Chores}
\label{alg:IMWM-chores}
\DontPrintSemicolon

\KwIn{Agents~$N$, indivisible chores~$M$ with $|M| = T \cdot n - k$ and agents' utilities $(u_i)_{i \in N}$.}
\KwOut{An \EFone and \EFable allocation~$A = (A_1, \dots, A_n)$.}

$A_i \gets \emptyset$ for all~$i \in N$.\;
Extend~$M$ to~$\widehat{M}$ by introducing $k$ dummy items such that every agent values each dummy item at~$0$.\;
$t \gets 1$; $J_1 \gets \widehat{M}$\;
\While{$J_t \neq \emptyset$}{
    Compute a maximum-weight perfect matching $\mu^t = \{(i, \mu_i^t)\}$ in $H[N, J_t]$.\;
    $A_i \gets A_i \cup \{\mu_i^t\}$ for all~$i \in N$.\;
    $J_{t+1} \gets J_t \setminus \bigcup_{i \in N} \{\mu_i^t\}$\;
    $t \gets t + 1$\;
}
\Return{allocation~$A$}
\end{algorithm}

\begin{proof}[Proof of \Cref{prop:ind-chores:EF1+EFable}]
We will show that \Cref{alg:IMWM-chores} outputs an allocation that is both \EFone and \EFable.
Since there are exactly $T \cdot n$ chores in~$\widehat{M}$, \Cref{alg:IMWM-chores} terminates in~$T$ rounds.
Let $J = \widehat{M}$, the graph $H[N, J]$ is a complete weighted bipartite graph.
For each~$i \in N$ and~$c \in J$, edge~$(i, c)$ has weight~$u_i(c)$.
In each round~$t$, each agent~$i \in N$ is matched to a chore~$\mu_i^t$ (which could be a dummy item).
As a result, each agent~$i \in N$ receives a bundle $A_i = \{\mu_i^1, \dots, \mu_i^T\}$.

\medskip
\noindent\textbf{Allocation~$A$ is \EFone.}
Observe that $u_i(\mu_i^t) \geq u_i(c)$ for any chore~$c \in J_{t+1}$; otherwise, replacing the edge~$(i, \mu_i^t)$ with $(i, c)$ in~$\mu^t$ results in a matching of strictly larger weight in $H[N, J_t]$.
For any pair of agents~$i, j \in N$, we have
\begin{align*}
u_i(A_i \setminus \{\mu_i^T\}) = u_i(\{\mu_i^1, \dots, \mu_i^{T - 1}\})
&= u_i(\mu_i^1) + \cdots + u_i(\mu_i^{T-1}) \\
&\geq u_i(\mu_j^2) + \cdots + u_i(\mu_j^T) \\
&\geq u_i(\mu_j^1) + u_i(\mu_j^2) + \cdots + u_i(\mu_j^T) \\
&= u_i(A_j),
\end{align*}
meaning that the output allocation~$A$ of \Cref{alg:IMWM-chores} is \EFone.

\medskip
\noindent\textbf{Allocation~$A$ is \EFable.}
The argument follows exactly the proof of Lemma~3.1 in the paper by \citet{BrustleDiNa20}.
For the sake of being self-contained, we still provide the proof below.
By \Cref{thm:hs_ef}, it suffices to show envy graph~$\mathcal{G}_A$ has no positive-weight cycles.
Take any directed cycle~$\gamma$ in the envy graph~$\mathcal{G}_A$ and assume without loss of generality that $\gamma$ involves a sequence of nodes/agents $1, 2, \dots, k$ for some $k \geq 2$.
We have
\begin{align*}
w_{A}(\gamma) = \sum_{(i, j) \in \gamma} w_{A}(i, j) = \sum_{(i, j) \in \gamma} \left( u_i(A_j) - u_i(A_i) \right)
&= \sum_{(i, j) \in \gamma} \sum_{t = 1}^T \left( u_i(\mu^t_j) - u_i(\mu^t_i) \right) \\
&= \sum_{(i, j) \in \gamma} \sum_{t = 1}^T w_{\mu^t}(i, j) = \sum_{t = 1}^T \sum_{(i, j) \in \gamma} w_{\mu^t}(i, j)
\end{align*}

Let $\pi_\gamma$ be the permutation of~$N$ under which $\pi_\gamma(i) = i + 1$ for each $i \in [k-1]$, $\pi_\gamma(k) = 1$, and $\pi_\gamma(i) = i$ for $i \in N \setminus [k]$.
In each round~$t$, $\mu^t$ is a maximum-weight matching; therefore, $\sum_{(i, j) \in \gamma} w_{\mu^t}(i, j) \leq 0$.
Otherwise, giving each~$i$ the item $\mu^t_{\pi_\gamma(i)}$ results in a matching of strictly larger weight, a contradiction.
Thus, $w_A(\gamma) \leq 0$, implying that allocation~$A$ is \EFable.
\end{proof}
\end{toappendix}

Although the goal of the subsidy literature is different from the axiomatic study of fairness for fair division with mixed divisible and indivisible resources, in the next section, we will show that the tools developed in the subsidy literature are very useful to prove the existence of \EFM allocations when we divide a fixed set of divisible and indivisible resources.

\section{Cake to Money Reduction}
\label{sec:cake-to-money}

In this section, we show that, when considering the existence of \EFM allocations, the problem of allocating a cake among agents with heterogeneous valuations can be reduced to the problem of allocating money.
A defining characteristic of money is that all agents value it equally; a dollar holds the same worth regardless of who possesses it.
In contrast, when agents value the cake heterogeneously, the values assigned to different portions of the cake may vary arbitrarily across agents.
Despite these differences, the following result demonstrates that, with appropriate scaling, it is possible to ``simulate'' money using a cake.

\begin{theorem}[\citet{StroWoo85}]
\label{thm:simMoney}
Let $n \geq 2$, and let $(u_i)_{i \in N}$ be utility functions defined over the cake $C = [0, 1]$, with $u_i(C) = 1$ for all~$i \in N$.
Then, for any $\alpha \in [0, 1]$, there exists a set $I_\alpha \subseteq C$ such that $I_\alpha$ is the union of at most $n$ intervals and satisfies $u_i(I_\alpha) = \alpha$ for every~$i \in N$.
\end{theorem}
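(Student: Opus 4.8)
The plan is to prove the statement by a topological argument built on the Borsuk--Ulam theorem, encoding a \emph{signed partition} of the cake into at most $n+1$ consecutive pieces as a point of the sphere $S^n \subseteq \mathbb{R}^{n+1}$. For $x = (x_0, \dots, x_n) \in S^n$, I would define cut points $z_0 = 0$ and $z_i = \sum_{\ell < i} x_\ell^2$ for $i = 1, \dots, n+1$, so that $z_{n+1} = 1$ and the pieces $P_i = [z_{i-1}, z_i]$ partition $C$; assign to $P_i$ the sign $\mathrm{sign}(x_{i-1})$, and let $S^+(x)$ denote the union of the positively signed pieces. Since consecutive pieces of equal sign merge and there are only $n+1$ pieces, $S^+(x)$ is a union of at most $\lceil (n+1)/2 \rceil \le n$ intervals --- this is exactly the source of the ``at most $n$ intervals'' bound. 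The crucial symmetry is that negating $x$ leaves every $z_i$ unchanged (each depends only on $x_\ell^2$) while flipping all signs, so $S^+(-x) = C \setminus S^+(x)$ up to a null set. Writing $v_j(x) := u_j(S^+(x))$, non-atomicity of the $f_i$ gives continuity of each $v_j$ in $x$ (a piece has vanishing measure precisely where its sign becomes ambiguous, i.e. where $x_{i-1} = 0$), and the symmetry gives $v_j(-x) = 1 - v_j(x)$ for all $j$.

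The case $\alpha = \tfrac12$ then falls out cleanly. Define $\Psi(x) = \bigl(v_1(x) - \tfrac12, \dots, v_n(x) - \tfrac12\bigr)$; the symmetry above makes $\Psi \colon S^n \to \mathbb{R}^n$ a continuous \emph{odd} map, so by the Borsuk--Ulam theorem it has a zero $x^\ast$, and $I_{1/2} = S^+(x^\ast)$ is a union of at most $n$ intervals with $u_j(I_{1/2}) = \tfrac12$ for every $j \in N$.

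The main obstacle is general $\alpha$, because for $\alpha \neq \tfrac12$ the shifted map $(v_j - \alpha)_{j}$ is no longer odd and Borsuk--Ulam does not apply directly. To handle this I would instead study the \emph{difference} map $G(x) = \bigl(v_2(x) - v_1(x), \dots, v_n(x) - v_1(x)\bigr) \colon S^n \to \mathbb{R}^{n-1}$, which the same symmetry shows is odd. Its zero set $\Sigma = G^{-1}(0)$ is the antipodally symmetric set of configurations on which all $n$ measures agree, and on $\Sigma$ the common value $v(x) := v_1(x)$ is continuous and satisfies $v(-x) = 1 - v(x)$; note the extreme configurations $\pm e_n \in \Sigma$ realize the values $v = 1$ and $v = 0$ (there the single non-degenerate piece is all of $C$ with one fixed sign). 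The remaining task is to show $v(\Sigma) = [0,1]$, after which the intermediate value theorem produces, for each $\alpha \in [0,1]$, a point $x_\alpha \in \Sigma$ with $v(x_\alpha) = \alpha$ and hence $I_\alpha = S^+(x_\alpha)$, a union of at most $n$ intervals with $u_j(I_\alpha) = \alpha$ for all $j$.

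Establishing $v(\Sigma) = [0,1]$ is the delicate heart of the argument. Since $G$ is odd with target dimension $n-1 < n$, a strong form of the Borsuk--Ulam theorem (of Bourgin--Yang type, concerning the size and $\mathbb{Z}_2$-structure of zero sets of odd maps) guarantees that $\Sigma$ contains a connected, antipodally symmetric subset; along such a set $v$ is continuous and, by the relation $v(-x) = 1 - v(x)$, attains values symmetric about $\tfrac12$. The step I expect to require the most care is arguing that this symmetric component genuinely sweeps the \emph{entire} range $[0,1]$ --- reaching the extreme values $0$ and $1$ --- rather than merely an interval around $\tfrac12$; controlling the connectedness so that it links a value-$0$ configuration to a value-$1$ configuration is where the real work lies, and it is precisely the point at which the halving argument does not extend for free.
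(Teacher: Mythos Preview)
The paper does not prove this theorem at all: it is quoted verbatim as a result of Stromquist and Woodall (1985) and used as a black box in the proof of \Cref{thm:SuffCond}. So there is no ``paper's own proof'' to compare your proposal against.

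On the substance of your proposal: the $\alpha=\tfrac12$ case is correct and standard (this is essentially the Hobby--Rice theorem). For general~$\alpha$, your plan has a genuine gap, and you have already put your finger on it. The Bourgin--Yang strengthening of Borsuk--Ulam does give that the zero set $\Sigma=G^{-1}(0)$ of an odd map $S^n\to\mathbb{R}^{n-1}$ has $\mathbb{Z}_2$-index at least~$1$; from this one can indeed extract a connected, antipodally symmetric subset $K\subseteq\Sigma$, and on $K$ the continuous function $v$ with $v(-x)=1-v(x)$ takes values in an interval symmetric about~$\tfrac12$. What is missing is any reason why that interval should be all of $[0,1]$. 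You observe that $\pm e_n\in\Sigma$ realize the extreme values $0$ and~$1$, but nothing in the index argument forces $\pm e_n$ to lie in the particular symmetric component~$K$; a priori $\Sigma$ could have many components, with $e_n$ and $-e_n$ sitting in a separate antipodal pair of components disjoint from~$K$. Bridging this --- showing that the symmetric component actually connects a value-$0$ configuration to a value-$1$ configuration --- is not a detail but the entire difficulty, and your outline does not supply it.

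For context, Stromquist and Woodall's original argument does not go through Borsuk--Ulam on~$S^n$ for general~$\alpha$. They identify the endpoints of $[0,1]$ to form a circle and argue that the set of achievable values~$\alpha$ (with the stated bound on the number of arcs) is closed by compactness and is all of $[0,1]$ via a more hands-on continuity argument on the circle. If you want to salvage the topological route, one option is to introduce an auxiliary measure to force the value and then control the extra cut, but this requires care to keep the interval count at~$n$ rather than $n+1$.
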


The above result enables us to apply techniques from the fair division with subsidies literature to establish the existence of \EFM allocations.
We make this connection more precise below.

\begin{theorem}
\label{thm:SuffCond}
In the mixed indivisible goods and chores setting, if there exists an allocation of the indivisible items that is both \EFone and \EFable, then an \EFM allocation exists with cake.
\end{theorem}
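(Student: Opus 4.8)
The plan is to use \EFability to convert the fixed cake into a bounded form of ``money,'' then distribute that money as cake via a consensus division, letting \EFone absorb whatever the bounded cake cannot compensate. First I would invoke the hypothesis: since the indivisible-items allocation $A = (A_1, \dots, A_n)$ is \EFable, \Cref{thm:hs_ef} (equivalently, the definition) yields payments $p = (p_1, \dots, p_n)$ with $p_i \ge 0$ such that $\{(A_i, p_i)\}$ is envy-free, i.e.\ $u_i(A_i) - u_i(A_j) \ge p_j - p_i$ for all $i, j$. Using the normalization $u_i(C) \in \{0,1\}$, I partition the agents into $N_1 = \{i : u_i(C) = 1\}$ and $N_0 = \{i : u_i(C) = 0\}$, where every agent in $N_0$ values every piece of cake at $0$. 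If $N_1 = \emptyset$, I dump all the cake on a single agent: no agent values anyone's cake, so \EFM collapses to the \EFone property of $A$ and we are done.

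The crux is the case $N_1 \neq \emptyset$, where all of the cake (total value $1$ to each agent of $N_1$) must be allocated, and any cake handed to an agent $j$ that some $i \in N_1$ values positively forces full envy-freeness of $i$ toward $j$. Rather than realize $p$ itself, the key idea is to realize a \emph{shifted-and-truncated} payment vector: set $q_i = \max(p_i - \theta, 0)$ for $i \in N_1$ and $q_k = 0$ for $k \in N_0$, choosing $\theta$ so that $\sum_{i \in N_1} q_i = 1$. Such a $\theta$ exists because $h(\theta) = \sum_{i \in N_1}\max(p_i - \theta, 0)$ is continuous and ranges from $+\infty$ (as $\theta \to -\infty$) down to $0$, so it hits $1$ by the intermediate value theorem. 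I then realize these values as cake by iterating \Cref{thm:simMoney}: carving out, for each $i \in N_1$ in turn, a piece valued $q_i$ by \emph{every} agent of $N_1$, while the leftover cake value stays equalized across $N_1$ after each carve (so the theorem reapplies to the remainder after rescaling); agents in $N_0$ receive empty pieces. This produces an allocation $\{(A_i, C_i)\}$ with $u_\ell(C_i) = q_i$ for all $\ell \in N_1$ and $u_k(C_i) = 0$ for $k \in N_0$.

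It remains to verify \EFM, and the truncation is engineered precisely for this. For a pair $i, j$ with $i \in N_1$ and $q_j > 0$ (so $u_i(C_j) = q_j > 0$ and the escape clause is unavailable), I verify full envy-freeness $u_i(A_i) + q_i \ge u_i(A_j) + q_j$ in two cases: if $q_i > 0$ then $q_j - q_i = p_j - p_i$ and the inequality is exactly the envy-free-payment condition; if $q_i = 0$ then $p_i \le \theta$, whence $q_j - q_i = p_j - \theta \le p_j - p_i \le u_i(A_i) - u_i(A_j)$. Every remaining pair falls under \EFM's escape clause: if $j \in N_0$ or $q_j = 0$ then $u_i(C_j) = 0$, and if $i \in N_0$ then $i$ values all cake at $0$; in either case the only requirement is \EFone toward $j$, which holds because $A$ is \EFone and the cake does not enter the \EFone comparison. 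I expect the main obstacle to be exactly this interplay between the fixed unit cake budget and the payment vector: a plain rescaling of $p$ destroys envy-freeness, so the real work is recognizing that the shift-and-truncate operation simultaneously respects the unit budget (via the choice of $\theta$), preserves the envy-free inequalities toward every \emph{funded} agent, and leaves precisely the unfunded and $N_0$ agents to be handled by \EFone through the \EFM escape clause.
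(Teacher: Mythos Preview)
Your proof is correct and follows essentially the same approach as the paper: partition agents by whether they value the cake, realize payments as cake via Stromquist--Woodall (\Cref{thm:simMoney}), and construct a modified payment vector summing to~$1$ that preserves envy-freeness toward every funded agent while letting \EFone handle the rest through the \EFM escape clause. Your shift-and-truncate $q_i = \max(p_i - \theta, 0)$ chosen via the intermediate value theorem is in fact a cleaner, unified formulation of what the paper does in two separate cases---adding a uniform amount when the payments sum to at most~$1$, and an iterative level-by-level distribution otherwise---and the paper's iterative procedure computes exactly your truncation.
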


\begin{proof}
Let $A = (A_1, \dots, A_n)$ be an indivisible-items allocation that is both \EFable and \EFone.
We will show that there exists an allocation of the cake that preserves the underlying discrete allocation~$A$ and yields a combined allocation~$\A = \{(A_i, C_i)\}_{i \in N}$ of indivisible items and cake that satisfies \EFM.
Recall that under the scaling, each agent values the entire cake either at~$1$ or~$0$.
Let $N^+ \coloneqq \{i \in N \mid u_i(C) = 1\} $ and $N^0 \coloneqq N \setminus N^+$; note that for every~$j \in N^0$, we have $u_j(C) = 0$.

\begin{claim}
\label{claim:EFM-money=EFM-cake}
If there exists an \EFM allocation with money, $\{(A_i, p_i)\}_{i \in N^+}$, among the agents in~$N^+$ such that $\sum_{i \in N^+} p_i = 1$, then there exists a combined allocation $\A = \{(A_i, C_i)\}_{i \in N}$ that is \EFM.
\end{claim}

We make use of the result of \citet{StroWoo85} to prove the claim.

\begin{toappendix}
\begin{proof}[Proof of \Cref{claim:EFM-money=EFM-cake}]
For each $p_i > 0$ with~$i \in N^+$, we iteratively apply \Cref{thm:simMoney} to the cake~$C$ to generate pieces~$C_i$'s such that $u_j(C_i) = p_i$ for all~$j \in N^+$.
If at the end of this process, the cake~$C$ is not fully allocated, then the remaining cake must be valued at zero by every agent~$i \in N^+$,
\[
u_i \left( C \setminus \bigcup_{j \in N^+ \colon p_j > 0} C_j \right) = 1 - \sum_{j \in N^+ \colon p_j > 0} p_j = 0.
\]
We allocate the remaining cake to any agent in~$N^+$.
At this stage, the entire cake is allocated.
Consider now the full allocation of the indivisible items and the cake:
\[
\A = \begin{cases}
(A_i, C_i) & \text{ if } i \in N^+; \\
(A_i, \emptyset) & \text{ if } i \in N^0.
\end{cases}
\]

We now proceed to show allocation~$\A$ satisfies \EFM.
Recall that the underlying indivisible-items allocation~$A$ is \EFone.
For any agent~$j \in N^0$, since they value the entire cake at zero, any measurable subset of the cake is also valued at zero.
Thus, for these agents, \EFM reduces to requiring that the allocation of indivisible items is \EFone which the discrete allocation~$A$ satisfies.

For any agent~$j \in N^+$, if they envy another agent~$k \in N^0$, then \EFM holds because agent~$k$ does not receive any cake and the underlying discrete allocation~$A$ is \EFone.

Finally, for any pair of agents $i, j \in N^+$, if~$i$ is envy-free toward~$j$ in the allocation $\{(A_i, p_i)\}_{i \in N^+}$, then the envy-free relation holds in~$\A$ since $u_i(A_i) + u_i(C_i) = u_i(A_i) + p_i \geq u_i(A_j) + p_j = u_i(A_j) + u_i(C_j)$.
On the other hand, if~$i$ envies~$j$ in $\{(A_i, p_i)\}_{i \in N^+}$, then $p_j = 0$.
Since the underlying discrete allocation~$A$ is \EFone, the \EFM condition is satisfied for~$\A$.
\end{proof}
\end{toappendix}

\Cref{claim:EFM-money=EFM-cake} establishes that, for an \EFone and \EFable allocation~$A$, it suffices to find payments among agents~$N^+$ that make them \EFM with money.
We now show how to construct such payments.

Note that the allocation~$A$ restricted to~$N^+$ is envy-freeable: any positive-weight cycle among the agents in~$N^+$ would also be a positive-weight cycle in the full envy graph~$\mathcal{G}_A$, contradicting the \EFable property of $A$.
Thus, no such cycle exists.

Let~$\widetilde{\mathcal{G}}$ be the envy-graph of allocation~$A$ restricted to agents~$N^+$, i.e., the subgraph of~$\mathcal{G}_A$ induced by the vertex set~$N^+$.
For each~$i \in N^+$, let~$\ell(i)$ denote the weight of the heaviest path starting from node~$i$ in~$\widetilde{\mathcal{G}}$.
Since $\widetilde{\mathcal{G}}$ has no positive-weight cycles, $\ell(i)$ is well-defined.
By Theorem~2 of \citet{HalpernSh19}, assigning each agent~$i \in N^+$ a payment $q_i = \ell(i)$ yields an envy-free allocation with money among agents in $N^+$.

If $\sum_{i \in N^+} q_i \leq 1$, set $p_i = q_i + \frac{1- \sum_{i \in N^+} q_i}{|N^+|}$ for each $i\in N^+$.
Then $\{(A_i, p_i)\}_{i \in N^+}$ remains envy-free and satisfies $\sum_{i \in N^+} p_i = 1$, establishing \EFM.

If instead $\sum_{i \in N^+} q_i > 1$, we distribute the unit amount of money while preserving \EFM.
Let $q^{(1)} > q^{(2)} > \cdots > q^{(k)} = 0$ denote the \emph{distinct} payment values among the $(q_i)_{i \in N^+}$, and for each~$t$, let $N^t = \{i \in N^+ \mid q_i = q^{(t)}\}$.
We proceed iteratively to build up the desired payments:
\begin{itemize}
\item In round~1, allocate an amount of money worth $q^{(1)} - q^{(2)}$ to each agent in~$N^1$.
\item In round~2, allocate an amount of money worth $q^{(2)} - q^{(3)}$ to each agent in $N^1 \cup N^2$.
\item Continue similarly, giving in round~$r$ an amount $q^{(r)} - q^{(r+1)}$ to each agent in~$\bigcup_{t = 1}^r N^t$, until the total allocated amount reaches exactly~$1$.

If, in some round $r$, the remaining money is insufficient to allocate the full increment $q^{(r)} - q^{(r+1)}$, we instead divide the remaining money equally among agents in~$\bigcup_{t = 1}^r N^t$; denote by~$\widehat{q}$ the amount each agent receives.
Clearly, the total money distributed equals~$1$.
\end{itemize}

Let $\{(A_i, p_i)\}_{i \in N^+}$ be the resulting allocation with payments.
By construction, we have that $\sum_{i \in N^+} p_i = 1$.
We now show $\{(A_i, p_i)\}_{i \in N^+}$ satisfies \EFM with money.
If $p_i, p_j > 0$, then $p_i - p_j = q_i - q_j$ by construction, and hence $i$ and $j$ are envy-free towards each other since with the payment $\{(A_i, q_i)\}_{i \in N^+}$ the allocation is envy-free.
Suppose $p_j > 0$ and $p_i = 0$; we must show that $i$ is envy-free towards $j$.
Since $p_j > 0$ and $p_i = 0$, we have $q_j > q_i$, meaning $j \in N^a$ and $i \in N^b$ with $a \leq r < b$, where $r$ is the final round in which money is allocated.
Then,
\[
q_j - q_i = q^{(a)} - q^{(b)} = \sum_{t = a}^{b - 1} \left( q^{(t)} - q^{(t+1)} \right) \geq \sum_{t = a}^{r - 1} \left( q^{(t)} - q^{(t+1)} \right) + \widehat{q} = p_j.
% q_j - q_i = q^{(a)} - q^{(b)} = \sum_{t = a}^{b - 1} \left( q^{(t)} - q^{(t+1)} \right) \geq \sum_{t = a}^{r - 1} \left( q^{(t)} - q^{(t+1)} \right) + \lambda \cdot \left( q^{(r)} - q^{(r+1)} \right) = p_j.
\]
Recall that $\{(A_i, q_i)\}_{i\in N^+}$ is envy-free, thus we have $u_i(A_i) + q_i \geq u_i(A_j) + q_j$.
Agent~$i$ is envy-free towards~$j$ because $u_i(A_i) + p_i = u_i(A_i) \geq u_i(A_j) + (q_j - q_i) \geq u_i(A_j) + p_j$.
Since discrete allocation~$A$ is \EFone and no agent envies another who received a positive payment, thus $\{(A_i, p_i)\}_{i \in N^+}$ satisfies \EFM with money.
Applying \Cref{claim:EFM-money=EFM-cake}, we obtain an \EFM allocation with cake.
\end{proof}

\Cref{thm:SuffCond} establishes an interesting connection between solution concepts, i.e., envy-freeness, \EFone, \EFability, and \EFM, developed and investigated in two closely related but separate lines of research.

With only indivisible goods, an \EFone and \EFable allocation always exists~\citep{BrustleDiNa20}.
Therefore, the proof of \Cref{thm:SuffCond} provides a novel approach to compute an \EFM allocation for agents with additive preferences over a set of indivisible goods and a cake.
Our method exhibits different properties compared with the algorithm devised by \citet{BeiLiLi21}.

On the one hand, \citeauthor{BeiLiLi21}'s algorithm indeed produces an allocation satisfying a stronger envy-freeness relaxation and is able to handle monotone utilities over the indivisible goods~\citep[see,][pp.~1389 ff.]{LiuLuSu24}.
On the other hand, however, as \citet[pp.~15]{BhaskarSrVa21} pointed out, \citeauthor{BeiLiLi21}'s framework fails to generalize to the setting involving indivisible chores and a cake.
In more detail, \citeauthor{BeiLiLi21}'s algorithm relies crucially on resolving envy cycles among agents so that the algorithm can continue allocating more cake to the current allocation.
The challenge is that with indivisible chores, we can no longer resolve \emph{any} envy cycle.

Our method presented in the proof of \Cref{thm:SuffCond} \emph{preserves} the underlying discrete allocation.
Put differently, we are able to hand out pieces of cake to agents without permuting agents' bundles.
This property makes our method naturally fit to tackle the said challenge, even in the more general setting with both indivisible goods and chores, as long as an \EFone and \EFable indivisible-items allocation always exists.\footnote{Combining \Cref{prop:ind-chores:EF1+EFable} and \Cref{thm:SuffCond}, it follows immediately that with indivisible chores and a cake, an \EFM allocation always exists.}

\section{Existence of EFM}
\label{sec:EF1+EFable}

We are now ready to present the main results of our paper.

\begin{theorem}
\label{thm:EFM-existence}
% With indivisible goods and chores and a cake, an \EFM allocation always exists.
For indivisible items and cake, an \EFM allocation always exists.
\end{theorem}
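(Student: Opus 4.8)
The plan is to eliminate the cake entirely and reduce to a combinatorial statement about the indivisible items. By \Cref{thm:SuffCond}, it suffices to produce, for any instance of mixed indivisible goods and chores, a single allocation of $M$ that is simultaneously \EFone and \EFable; the cake-to-money simulation of \Cref{thm:simMoney} then lets us pour the cake on top of this fixed discrete allocation without ever permuting bundles, yielding \EFM. Thus the entire difficulty is concentrated in proving that an \EFone and \EFable allocation of goods and chores always exists (\Cref{thm:EF1+EFable-existence}).

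My starting point would be the iterative maximum-weight perfect matching scheme that already delivers \EFone and \EFability in the two pure settings, chores-only (\Cref{prop:ind-chores:EF1+EFable}) and goods-only~\citep{BrustleDiNa20}. There, \EFability is essentially free: each round's matching is welfare-maximizing, so no single round contributes a positive-weight cycle, and by additivity together with \Cref{thm:hs_ef} the summed envy graph has no positive-weight cycle either; \EFone then follows because every agent can discard its last-round item. The obstruction flagged by \citet{BhaskarSrVa21} is that with subjective items the same object is a good for some agents and a chore for others, so matching raw items round-by-round no longer bounds the envy: an agent repeatedly stuck with objective chores can envy another by strictly more than one item.

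To fix this I would pre-aggregate the items into \emph{meta-items} before matching. First bundle greedily (\Cref{alg:itemmerge}) so that every meta-item is a good for at least one agent, leaving a residual pool of objective chores, and then branch on the size of that pool. If at least $n$ objective chores remain, I can absorb every meta-good back into the chore pool so the instance becomes objective-chores-only, and run iterative matching while \emph{lexicographically maximizing} the vector of per-round matching weights over all admissible ways of combining items; this lexicographic tie-breaking is what I expect to simultaneously force \EFone and \EFability (\Cref{prop:EF1+EFable:>=n-chores}). If fewer than $n$ objective chores remain, some agent must genuinely receive a meta-good, so I would instead refine the bundling (\Cref{alg:chore-max-good-min})---possibly undoing part of the initial merge---to split the agents into a good-receiving group and a chore-receiving group with the separation property that every chore-receiver regards each good-receiver's bundle as a chore, and then run two coordinated matching routines, one per group (\Cref{prop:EF1+EFable:<n-chores:>0}).

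The main obstacle, and the heart of the argument, is this second regime. Running two separate matchings risks creating cross-group envy that violates \EFone, or a positive-weight cycle that threads through both groups and destroys \EFability. Engineering a bundling whose separation property makes the cross-group comparisons automatically \EFone while keeping the global envy graph cycle-free is where I expect the delicate work to lie, and it is precisely the point at which the uniform single-matching argument of the pure settings breaks down.
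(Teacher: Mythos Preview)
Your proposal is correct and follows essentially the same approach as the paper: reduce via \Cref{thm:SuffCond} to the existence of an \EFone and \EFable allocation of the indivisible items, establish this by the initial bundling of \Cref{alg:itemmerge} followed by the case split on $|Z|$, handling $|Z|\ge n$ by lexicographically maximizing iterative perfect matchings (\Cref{prop:EF1+EFable:>=n-chores}) and $|Z|\le n-1$ by the refined bundling of \Cref{alg:chore-max-good-min} together with the two coordinated matchings for the chore-receiving and good-receiving groups (\Cref{prop:EF1+EFable:<n-chores:>0}). Your identification of the cross-group envy and global cycle-freeness in the second regime as the crux is also exactly where the paper concentrates its effort.
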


We will prove \Cref{thm:EFM-existence} by showing that given any instance of indivisible goods and chores, an \EFone and \EFable allocation always exists, and hence by \Cref{thm:SuffCond}, the existence of an \EFM allocation is guaranteed.

\begin{theorem}
\label{thm:EF1+EFable-existence}
With indivisible goods and chores, an \EFone and \EFable allocation always exists.
\end{theorem}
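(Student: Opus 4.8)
The plan is to reduce the theorem to a question about how to \emph{bundle} the indivisible items into \emph{meta-items} that are then allocated by iterative maximum-weight perfect matching. The point of using maximum-weight matching is that \EFability comes essentially for free: as in the proof of \Cref{prop:ind-chores:EF1+EFable}, any directed cycle in the envy graph decomposes round-by-round into matchings, and maximality of each round's matching forces the cycle weight to be non-positive, so the envy graph has no positive-weight cycle and \Cref{thm:hs_ef} gives \EFability. The entire difficulty is therefore to choose the bundling so that the \emph{same} matching output is also \EFone. First I would preprocess the instance by a greedy merging step that groups items into meta-goods---each meta-item being valued non-negatively by at least one agent---while leaving alone any item that is an objective chore for every agent. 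After this step the instance is fully described by a collection of meta-goods together with a set of objective chores, and I would split the analysis on whether the number of remaining objective chores is at least $n$ or strictly less than $n$.

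In the first case (at least $n$ objective chores) the idea is that there are enough chores to absorb the positive value of every meta-good: I would further combine each meta-good with objective chores until the merged bundle becomes an objective chore, reducing the instance to one consisting solely of objective chores, to which the mechanism behind \Cref{prop:ind-chores:EF1+EFable} applies and yields \EFability. The delicate point is that this combining must be done so that the resulting iterative matching is \emph{also} \EFone. I would pin down the correct combining by choosing, among all admissible ways of bundling, the one that lexicographically maximizes the vector of matching weights obtained across the successive rounds of the iterative matching. Lexicographic optimality of this vector is exactly the structural property that forces each agent to (weakly) prefer the bundle it is assigned in each round to any bundle still available in later rounds; feeding this into the telescoping argument (dropping the final-round bundle) gives \EFone, mirroring the inequality chain in \Cref{prop:ind-chores:EF1+EFable}.

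The second case (fewer than $n$ objective chores) is where the genuine obstacle lies, because there are too few chores to convert every meta-good into a chore, so some agents must truly receive a meta-good and the clean ``everything is a chore'' reduction is unavailable. Here I would refine the bundling---possibly partially undoing the initial merge---in order to partition the agents into two groups with a separating property: every agent who ends up holding a meta-chore must regard the bundle of any meta-good recipient as a chore as well. Given such a separation, I would run a goods-style iterative matching (in the spirit of \citet{BrustleDiNa20}) on the group receiving meta-goods and a chores-style iterative matching on the group receiving meta-chores, and argue that the separation property is precisely what prevents cross-group comparisons from breaking \EFone, while maximum-weight structure within each group preserves \EFability globally (no positive-weight cycle can use an edge crossing the two groups because of how the separation is defined).

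The main obstacle I anticipate is establishing the existence of this separating bundling in the second case and then verifying that the two matching procedures glue into a single allocation that is globally \EFone and \EFable. Concretely, the hard part is controlling cross-group envy: an agent holding a meta-chore must be \EFone toward an agent holding a meta-good even though the two bundles are qualitatively different in sign, and this is exactly the regime where prior techniques fail---the goods algorithm of \citet{BrustleDiNa20} is not \EFone once chores are present, and the standard \EFone constructions for goods and chores are not \EFable. Threading a single bundling through both requirements, and proving via the lexicographic / maximum-weight optimality that no positive-weight envy cycle is ever created, is the crux of the argument; combined with \Cref{thm:SuffCond} it then upgrades immediately to the existence of \EFM allocations with cake.
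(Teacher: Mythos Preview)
Your proposal is correct and follows essentially the same approach as the paper: initial bundling into meta-goods plus objective chores, a case split on whether at least $n$ objective chores remain, lexicographic maximization of the round-by-round matching vector in Case~I, and in Case~II a refined bundling (including the ``partial undoing'' you anticipate) that separates agents into a meta-chore group and a meta-good group, with a goods-style iterative matching run on the latter. The separating property you identify---meta-chore holders view every meta-good holder's bundle as a chore---is exactly the paper's Observation~2, and the global \EFability argument via decomposing each envy cycle across the two sub-allocations is the paper's route as well.
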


The remainder of this section is dedicated to proving \Cref{thm:EF1+EFable-existence}, which in and of itself is of independent interest.
Before going into the details of our proof, we first explain the high-level approach behind \Cref{thm:EF1+EFable-existence}.

\paragraph{Bundling Plus Iterative Matching.}
As the technique of applying iterative matching has obtained positive results in settings with only goods~\citep{BrustleDiNa20} or chores (\Cref{prop:ind-chores:EF1+EFable}), it is thus the immediate candidate tool for us to show \EFone and \EFability in the presence of both goods and chores.
The idea of applying iterative matching, however, fails without bundling items together.
This can be seen from a simple instance with two items and two agents who have an identical utility over the two items: one item is of value~$+1$ while the other item is of value~$-1$.
The only way to achieve both \EFone and \EFability is to bundle the two items together and allocate the bundle to any agent.

The main idea of the proof is to carefully \emph{bundle} certain items together and treat each resulting bundle as a single \emph{meta-item}.
It is also key to us that the way we bundle items should be consistent with how iterative matching algorithm allocates meta-items to the agents because both \EFone and \EFability will be ensured by the property of iterative matching.

In more detail, this bundling step yields a collection of pairwise disjoint sets $J = \{S_1, \dots, S_r\}$, where each $S_i \subseteq M$.
After bundling, we construct two bipartite graphs whose bipartitions correspond to the agent set~$N$ and the bundled item set~$J$ (each element $S \in J$ represents a node):
\begin{itemize}
\item bipartite graph $H[N, J]$, designed to handle objective chores (bundled or otherwise), and
\item bipartite graph $G[N, J]$, designed to handle subjective goods (also bundled or otherwise).
\end{itemize}
For each edge $(i, S)$, where $S \in J$ corresponds to a bundle, the edge weight is given by $u_i(S)$.
The graph structures of~$G$ and~$H$ differ slightly to capture the distinct nature of goods and chores.
Finally, we run the \emph{Iterative Maximum-Weight Matching} (\IMWM) algorithm on~$G$ and the \emph{Iterative Maximum-Weight Perfect Matching} (\IMWPM) algorithm on~$H$ to determine the final allocation.
The key challenge is to choose the bundling of items such that running the above algorithms yields an allocation that is both \EFable and \EFone.

\medskip
We now define several useful terminologies.

\begin{definition}[Meta-good]
We refer to a non-empty subset~$M' \subseteq M$ of indivisible items as a \emph{meta-good} if $u_i(M') \geq 0$ for some agent~$i \in N$.
\end{definition}
Note that a meta-good may be a singleton. In our algorithms and arguments, we treat each meta-good as an \emph{individual} item rather than as a set of items. A natural restriction on meta-goods, when reasoning about \EFone, is the following property.

\begin{definition}[Chore-maximality]
\label{def:chore-maximality}
A non-empty subset~$M' \subseteq M$ of indivisible items is said to be \emph{\choreMaximal} if $u_i(M') \geq 0$ for some~$i \in N$ (i.e., $M'$ is a meta-good) and for any objective chore~$c \in M \setminus M'$, we have $u_j(M' \cup \{c\}) < 0$ all~$j \in N$.
\end{definition}

\paragraph{Initial Bundling.}

\begin{algorithm}[t]
\caption{Iterative Item Merging}
\label{alg:itemmerge}
\DontPrintSemicolon

\KwIn{Agents~$N$, indivisible items~$M$, and agents utilities~$(u_i)_{i \in N}$.}
\KwOut{A set of meta-goods $\{M_1, \dots, M_\ell\}$ and a set of objective chores~$Z$.}

Let $U \subseteq M$ be the set of subjective goods in~$M$ and $Z = M \setminus U$ be the set of objective chores.

\While{there exists an agent~$i \in N$ with $\deg(i) \geq 2$ in $R[N, U]$}{ \label{alg:itemmerge:deg>=2-condition}
	Select an arbitrary agent~$i \in N$ with $\deg(i) \geq 2$.\;
	Let $S = \{g \in U \mid (i, g) \in E\}$. \tcp*{Neighbourhood items of~$i$.}
	Create a new node~$S^*$ representing~$S$. \tcp*{Merge neighbourhood items~$S$ into a meta-good.}
	\lForEach{agent~$j \in N$}{
		Add an edge $(j, S^*)$ to~$E$ if $u_j(S^*) \geq 0$.
	}
	Remove all edges incident to items in~$S$ from~$E$.\;
	Update item set $U \gets (U \setminus S) \cup \{S^*\}$.\; \label{alg:itemmerge:deg>=2-end}
}

Let $\{M_1, \dots, M_\ell\}$ be the set of $\ell$ meta-goods in $R[N, U]$. \tcp*{Each node in~$U$ is a meta-good.}

\While(\tcp*[f]{Chore maximality.}){$|Z| > 0$ \textup{\bfseries and} $\exists j \in [\ell], i \in N, c \in Z$ such that $u_i(M_j \cup \{c\}) \geq 0$}{ \label{alg:itemmerge:choreMax-condition}
	$M_j \gets M_j \cup \{c\}$\;
	$Z \gets Z \setminus \{c\}$\; \label{alg:itemmerge:choreMax-end}
}

\Return{A set of meta-goods $\{M_1, \dots, M_\ell\}$ and objective chores~$Z$}
\end{algorithm}

We begin with an initial bundling that forms the foundation for subsequent bundling steps, depending on the structure of the instance. Broadly speaking, this initial bundling guides later bundling operations and the branching of the instance into different cases.

Our initial bundling of the items aims to partition~$M$ into a set of objective chores and meta-goods each of which satisfies \choreMaximality.
We additionally require that each agent values at most one meta-good non-negatively. This is achieved by \Cref{alg:itemmerge}.

\Cref{alg:itemmerge} takes as input an instance $\langle N, M, (u_i)_{i \in N} \rangle$ and iteratively merges items into meta-goods followed by adding chores appropriately to satisfy \choreMaximality.
Let $U \subseteq M$ be the set containing all subjective goods in~$M$ and let $Z = M \setminus J$ be the set of (current) objective chores.
Consider the weighted bipartite graph $R[N, U] = (N \cup U, E)$, where there is an edge~$(i, g) \in E$ for some~$i \in N$ and~$g \in U$ if $u_i(g) \geq 0$, with edge weight $u_i(g)$.
The first \verb|while|-loop (\crefrange{alg:itemmerge:deg>=2-condition}{alg:itemmerge:deg>=2-end}) of \Cref{alg:itemmerge} identifies an agent~$i \in N$ whose degree is at least~$2$ in~$R[N, U]$ and merges agent~$i$'s neighbourhood items $S = \{g \in J \mid (i, g) \in E\}$ into a meta-good~$S^*$.
We then update the bipartite graph by removing~$S$ from the item set~$U$ and all edges incident to items in~$S$ from~$E$ as well as by adding the new node~$S^*$ to the item set~$U$ and edges~$(j, S^*)$ to~$E$ if $u_j(S^*) \geq 0$.
The \verb|while|-loop terminates when each agent values at most one node in~$U$ non-negatively; put differently, $\deg(i) \leq 1$ for all~$i \in N$ in the current~$R[N, U]$.
In each iteration, $|U|$ strictly decreases; hence, the \verb|while|-loop will terminate.

When the first \verb|while|-loop (\crefrange{alg:itemmerge:deg>=2-condition}{alg:itemmerge:deg>=2-end}) terminates, assume that we have $\ell$ meta-goods, denoted as $M_1, M_2, \dots, M_\ell$.
Note that a meta-good $M_j \in \{M_1, M_2, \dots, M_\ell\}$ may be a singleton.
Recall that $Z$ contains the set of objective chores.
The second \verb|while|-loop (\crefrange{alg:itemmerge:choreMax-condition}{alg:itemmerge:choreMax-end}) of \Cref{alg:itemmerge} further merges objective chores to the $\ell$ meta-goods so that each updated meta-good could also satisfy \choreMaximality.
In more detail, in each iteration of \crefrange{alg:itemmerge:deg>=2-condition}{alg:itemmerge:deg>=2-end}, we identify a meta-good~$M_j$ and an objective chore~$c \in Z$ such that $u_i(M_j \cup \{c\}) \geq 0$ for some~$i \in N$, merge~$c$ into~$M_j$, and remove~$c$ from~$Z$.
Clearly, the updated~$M_j$ remains a meta-good and since we added an objective chore, we maintain the invariant that each agent values at most one meta-good non-negatively.
Since $|Z|$ strictly decreases in each iteration, the \verb|while|-loop will terminate.
At its termination, either $|Z| = 0$ or all meta-goods~$M_j \in \{M_1, \dots, M_\ell\}$ satisfy \choreMaximality.
To summarize, we have the following lemma.

\begin{lemma}
\label{lem:IterMerge}
Given instance $\langle N, M, (u_i)_{i \in N} \rangle$, let $\{M_1, \dots, M_\ell\}$ be the set of meta-goods outputted by \Cref{alg:itemmerge}.
Then, when $|Z| > 0$, each meta-good $M_j \in \{M_1, \dots, M_\ell\}$ is \choreMaximal.

Furthermore, for each~$j \in [\ell]$, define $T_j \coloneqq \{i \in N \mid u_i(M_j) \geq 0\}$.
The sets $T_1, \dots, T_\ell$ are pairwise disjoint.
That is, $T_j \cap T_k = \emptyset$ for all $j \neq k$.
\end{lemma}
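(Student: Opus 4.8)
The plan is to read off both conclusions as invariants maintained by the two while-loops of \Cref{alg:itemmerge}. The workhorse throughout is a simple \emph{edge invariant} for the bipartite graph $R[N, U]$: an edge $(i, g)$ is present exactly when $u_i(g) \geq 0$. This holds at initialization, and it is preserved every time a merged node $S^*$ is created, since the algorithm inserts $(j, S^*)$ precisely for those $j$ with $u_j(S^*) \geq 0$ and deletes all edges into the absorbed items. With this invariant in hand, each part reduces to inspecting the relevant loop's termination condition.

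First I would handle \choreMaximality. I would begin by checking that every $M_j$ is a meta-good. A node surviving the first loop is either an original subjective good (a meta-good by definition) or a merged node $S^*$, where $S$ is the neighbourhood of some agent~$i$; since every $g \in S$ satisfies $u_i(g) \geq 0$ by the edge invariant, additivity gives $u_i(S^*) = \sum_{g \in S} u_i(g) \geq 0$, so $S^*$ is a meta-good. The second loop only absorbs an objective chore~$c$ into $M_j$ when $u_{i'}(M_j \cup \{c\}) \geq 0$ for some~$i'$, so the meta-good property persists. \choreMaximality then follows directly from the termination condition of the second loop: when it halts with $|Z| > 0$, its guard has failed, so there is no triple $(j, i, c)$ with $c \in Z$ and $u_i(M_j \cup \{c\}) \geq 0$; equivalently, $u_i(M_j \cup \{c\}) < 0$ for every $j \in [\ell]$, every $i \in N$, and every remaining objective chore $c \in Z$, which is exactly what \Cref{def:chore-maximality} demands (the witnessing objective chores being those still in~$Z$).

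Next I would establish disjointness of the $T_j$'s by proving it first for the meta-goods $M_j^{(0)}$ present at the end of the first loop and then propagating it through the second loop by monotonicity. The first loop halts only when $\deg(i) \leq 1$ for every~$i$ in $R[N, U]$; by the edge invariant this says each agent values at most one surviving node non-negatively, so the sets $T_j^{(0)} \coloneqq \{i \in N \mid u_i(M_j^{(0)}) \geq 0\}$ are pairwise disjoint. For the final meta-goods, I would use that $M_j \supseteq M_j^{(0)}$ with $M_j \setminus M_j^{(0)}$ consisting solely of objective chores, so additivity yields $u_i(M_j) \leq u_i(M_j^{(0)})$ for every~$i$. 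Hence $i \in T_j$ forces $u_i(M_j^{(0)}) \geq u_i(M_j) \geq 0$, i.e.\ $i \in T_j^{(0)}$, giving $T_j \subseteq T_j^{(0)}$; disjointness of the $T_j^{(0)}$ therefore transfers verbatim to the $T_j$.

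The hard part is not any single computation but the monotonicity argument underpinning the second claim: one must confirm that absorbing objective chores can only \emph{shrink} each $T_j$, never enlarge it and never create a fresh overlap, so that no agent can come to value a second meta-good non-negatively after the first loop ends. The remaining delicate point is purely one of bookkeeping, namely that \choreMaximality is to be read relative to the objective chores that survive in~$Z$; chores already absorbed into some meta-good are treated as part of that meta-item and need not be reconsidered.
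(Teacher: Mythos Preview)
Your proposal is correct and mirrors the paper's own argument, which is the discussion immediately preceding the lemma statement: both read off \choreMaximality from the termination condition of the second \texttt{while}-loop and read off disjointness of the $T_j$'s from the degree-$\leq 1$ invariant at the end of the first loop together with the observation that absorbing objective chores only shrinks each $T_j$. Your explicit monotonicity step ($T_j \subseteq T_j^{(0)}$) and your closing remark that \choreMaximality should be understood relative to the surviving chores in~$Z$ are both well taken; indeed, the paper only ever uses the property for $c \in Z$, so your reading is the operative one.
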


Given any instance $\langle N, M, (u_i)_{i \in N} \rangle$, we apply \Cref{alg:itemmerge} and let $\{M_r\}_{r \in [\ell]}$ denote the set of meta-goods produced.
We distinguish our analyses based on the number of remaining objective chores, defined as $Z = M \setminus \bigcup_{r = 1}^\ell M_r$.
The arguments required to establish the existence of \EFone and \EFable allocations differ depending on whether $|Z| \geq n$ or $|Z| \leq n - 1$.
At a high level, these case distinctions are necessary for several reasons.
When $|Z| \geq n$, we can always combine each meta-good with a distinct objective chore from~$Z$ such that the resulting instance consists solely of objective chores.
In contrast, when $|Z| \leq n-1$, no such combination may be possible, which introduces additional technical complications and necessitates a different bundling strategy.
In the following subsections, we analyze these two cases in \Cref{sec:EFM:>=n-chores,sec:EFM:<n-chores}, respectively.

\subsection{Case~I: At Least~$n$ Remaining Objective Chores (\texorpdfstring{$|Z| \geq n$}{|Z| >= n})}
\label{sec:EFM:>=n-chores}

In this case, we have a collection of meta-goods $\{M_r\}_{r \in [\ell]}$ and a set of objective chores $Z = \{c_1, \dots, c_k\}$ with $k \geq n$ outputted by \Cref{alg:itemmerge}.
Recall that each of the meta-goods in $\{M_r\}_{r \in [\ell]}$ is \choreMaximal by \Cref{lem:IterMerge}.
Hence, for any objective chore~$c \in Z$, the combined bundle $M_r \circ c$ constitutes a meta-chore, i.e., $u_i(M_r \circ c) < 0$ for all~$i \in N$.\footnote{We use the notation ``$M_r \circ c$'' to denote the operation that a meta-good~$M_r$ is \emph{attached} to an objective chore~$c$, and refer to $M_r \circ c$ as a \emph{meta-chore} in order to distinguish it from a singleton chore.
In our argument presented later, given a meta-chore $M_r \circ c$, we may \emph{detach} the meta-good~$M_r$ from the singleton chore~$c$.
We thus decide to use a different notation for a meta-chore from that for a meta-good.}
Furthermore, by \Cref{lem:IterMerge}, $T_j$'s are pairwise disjoint.
Together with $\bigcup_{j \in [\ell]} T_j \subseteq N$, it follows that $\ell \leq n \leq k$.
It means that each meta-good can be paired with a distinct objective chore to form a meta-chore.

Let $\Inj([\ell], [k])$ denote the set of all injective functions $\varphi \colon [\ell] \to [k]$ (i.e., one-to-one assignment of the $\ell$ meta-goods $\{M_r\}_{r \in [\ell]}$ to $\ell$ distinct chores in $Z$).
For each $\varphi \in  \Inj([\ell], [k])$, define the corresponding set of meta-chores, which forms a partition of the item set $M$, as
$$
J^\varphi = \left( \bigcup_{i \in [\ell]} M_i \circ c_{\varphi(i)} \right) \cup \left( Z \setminus \bigcup_{j \in [\ell]} c_{\varphi(j)} \right).
$$
The first term represents the $\ell$ meta-chores obtained by combining each of the $\ell$ meta-good with a distinct objective chore, while the second term consists of the remaining $k - \ell$ unattached chores that are singletons.
Note that for any $\varphi \in \Inj([\ell], [k])$, the cardinality satisfies $|J^\varphi| = |Z|$.

We now introduce dummy items so that the total number of the extended set of items is a multiple of~$n$.
If $|Z| = T \cdot n - r$, we introduce $r$ dummy items $d_1, \dots, d_r$, each of which is valued at~$0$ by every agent.
For each $\varphi \in  \Inj([\ell], [k])$, we construct the padded instance with dummy items,
$$
\widetilde{J}^\varphi = \left( \bigcup_{a \in [r]} d_a \right) \cup \left( \bigcup_{i \in [\ell]} M_i \circ c_{\varphi(i)} \right) \cup \left( Z \setminus \bigcup_{j \in [\ell]} c_{\varphi(j)} \right).
$$
By construction, we have $|\widetilde{J}^\varphi| = T \cdot n$.

\begin{algorithm}[t]
\caption{Iterative Maximum-Weight Perfect Matching (\IMWPM)}
\label{alg:IMWM-chores-graph}
\DontPrintSemicolon

\KwIn{Graph $H[N, \widetilde{J}^\varphi]$ with $|\widetilde{J}^\varphi| = T \cdot n$}
\KwOut{An allocation $A = (A_1, \dots, A_n)$.}

$A_i \gets \emptyset$ for all~$i \in N$.\;
$t \gets 1$; $J_1 \gets \widetilde{J}^\varphi$\;
\While{$J_t \neq \emptyset$}{
    Compute a maximum-weight perfect matching $\mu^t = \{(i, \mu_i^t)\}$ in $H[N, J_t]$.\;
    $A_i \gets A_i \cup \{\mu_i^t\}$ for all~$i \in N$.\;
    $J_{t+1} \gets J_t \setminus \bigcup_{i \in N} \{\mu_i^t\}$\;
    $t \gets t + 1$\;
}

\Return{allocation~$A$}
\end{algorithm}

Given a $\varphi \in \Inj([\ell], [k])$, we construct a complete weighted bipartite graph $H[N, \widetilde{J}^\varphi] = (N \cup \widetilde{J}^\varphi, E)$.
For each~$i \in N$ and~$h \in \widetilde{J}^\varphi$, edge $(i, h)$ has weight $u_i(h)$.
Here, each~$h \in \widetilde{J}^\varphi$ is either a dummy item, a meta-chore of the form $M_r \circ c_{\varphi(r)}$ for some $r \in [\ell]$, or a singleton chore.
For any subset $\widehat{J} \subseteq \widetilde{J}^\varphi$, let $H[N, \widehat{J}]$ denote the subgraph of $H[N, \widetilde{J}^\varphi]$ induced by~$N$ and~$\widehat{J}$.

Denote by $\IMWPM(H[N, \widetilde{J}^{\varphi}]) = (\mu^1, \dots, \mu^T)$ the matching outputted by the Iterative Maximum-Weight Perfect Matching (\Cref{alg:IMWM-chores-graph}).
Let $T$-dimensional vector
\[
\val(\IMWPM(H[N, \widetilde{J}^{\varphi}])) = (\val(\mu^1), \dots, \val(\mu^T))
\]
denote the vector of matching values, where in coordinate~$t \in [T]$, $\val(\mu^t) \coloneqq \sum_{i \in N} u_i(\mu^t_i)$.
We select an injective map~$\varphi^*$ that \emph{lexicographically maximizes} this vector of matching values; that is, we first maximize~$\val(\mu^1)$, subject to that we maximize~$\val(\mu^2)$, and so on.
In other words, we choose $\widetilde{J}^{\varphi^*}$ such that
$$
\val(\IMWPM(H[N, \widetilde{J}^{\varphi^*}])) = (\val(\mu^1), \dots, \val(\mu^T))
$$
is lexicographically maximal.

\begin{proposition}
\label{prop:EF1+EFable:>=n-chores}
Under Case~I, the allocation obtained by running $\IMWPM(H[N, \widetilde{J}^{\varphi^*}])$ is both \EFone and \EFable.
\end{proposition}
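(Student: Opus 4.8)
The plan is to verify the two required properties separately, because \EFability comes essentially for free from the structure of iterative maximum-weight perfect matching (for \emph{any} $\varphi$), whereas \EFone is where the lexicographic choice of $\varphi^*$ does the real work.

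For \EFability I would mirror the argument in the proof of \Cref{prop:ind-chores:EF1+EFable}. Fix $\varphi^*$ and let $(\mu^1, \dots, \mu^T)$ be the matchings produced by $\IMWPM(H[N, \widetilde{J}^{\varphi^*}])$, so that the final bundles are $A_i = \bigcup_{t} \mu_i^t$. By additivity the envy-graph weights split round by round, $w_A(i,j) = \sum_{t=1}^T (u_i(\mu_j^t) - u_i(\mu_i^t))$, hence for any cycle $\gamma$ in $\mathcal{G}_A$ we get $w_A(\gamma) = \sum_{t=1}^T \sum_{(i,j)\in\gamma}(u_i(\mu_j^t) - u_i(\mu_i^t))$. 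Since each $\mu^t$ is a maximum-weight \emph{perfect} matching, reassigning each agent of $\gamma$ the bundle of its successor is another perfect matching, so the round-$t$ cyclic sum is nonpositive; therefore $w_A(\gamma)\le 0$ and $A$ is \EFable by \Cref{thm:hs_ef}. This step uses nothing about lexicographic optimality.

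For \EFone I would first reproduce the ``up to the last meta-item'' bound. The crucial structural fact is that every element of $\widetilde{J}^{\varphi^*}$ is nonpositive for every agent: each meta-chore $M_r\circ c$ is negative for all agents by \choreMaximality (\Cref{lem:IterMerge}), each singleton objective chore is negative, and each dummy is zero. Exactly as in \Cref{prop:ind-chores:EF1+EFable}, the maximum-weight property gives $u_i(\mu_i^t)\ge u_i(h)$ for all $h\in J_{t+1}$, and telescoping together with $u_i(\mu_j^1)\le 0$ yields $u_i(A_i\setminus\{\mu_i^T\})\ge u_i(A_j)$ for every pair $i,j$. If $\mu_i^T$ is a dummy or a singleton chore this already certifies \EFone (for a dummy, even envy-freeness). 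The only remaining case is $\mu_i^T = M_r\circ c_{\varphi^*(r)}$, a genuine bundle: removing the whole meta-item is not a single-item removal, so I would instead remove only the attached objective chore $c_{\varphi^*(r)}$. Using $u_i(A_i) - u_i(M_r) - u_i(c_{\varphi^*(r)}) \ge u_i(A_j)$, this single removal suffices precisely when $u_i(M_r)\ge 0$, i.e.\ when $i\in T_r$.

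Everything therefore reduces to the key claim that, under the lexicographically optimal $\varphi^*$, whenever an agent $i$ receives a meta-chore $M_r\circ c_{\varphi^*(r)}$ (in particular in the last round), one has $i\in T_r$; that is, each meta-good is handed to one of its owners. This is the main obstacle, and it is exactly where lexicographic maximization is indispensable: a plain within-round maximum-weight swap of $i$ with an owner $p\in T_r$ does \emph{not} by itself force $i\in T_r$, so the contradiction must come from the value vector. The sets $T_r$ are nonempty and pairwise disjoint with $\sum_r |T_r|\le n\le k$ (\Cref{lem:IterMerge}), so an assignment routing every meta-chore to a distinct owner exists; from any violating allocation I would reroute $M_r$ from a non-owner $i$ (with $u_i(M_r)<0$) to an owner $p$ (with $u_p(M_r)\ge 0$), increasing value from the $M_r$ part by $u_p(M_r)-u_i(M_r)>0$, and realize this reroute by re-choosing the injection $\varphi$ (detaching $c_{\varphi^*(r)}$ and reattaching chores as in the footnote's operation) so as to obtain a lexicographically larger value vector, contradicting optimality of $\varphi^*$. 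I expect the delicate points to absorb most of the work: (i) performing the reattachment while keeping every round a perfect matching, and (ii) showing that, after accounting for the displaced items and the reattached chores, the earliest affected round's total value strictly increases, so that the modification is a genuine lexicographic improvement rather than a merely local one.
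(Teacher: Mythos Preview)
Your proposal is correct and follows essentially the same route as the paper: \EFability via the round-by-round maximum-weight perfect-matching argument, and \EFone via the ``remove the last meta-item'' bound together with the key claim that every meta-chore $M_r\circ c_{\varphi^*(r)}$ is received by some agent in~$T_r$. For your flagged concerns, the paper resolves~(i) by noting that $|Z|\ge n$ forces every agent to hold at least one objective chore, so each misplaced~$M_r$ can be detached and reattached to a chore currently held by an agent in~$T_r$ (disjointness of the~$T_r$'s keeps the new injection~$\vartheta$ valid), and resolves~(ii) by evaluating the \emph{same} agent-to-chore matching sequence under the new bundling~$\widetilde{J}^\vartheta$, which makes every round's value weakly higher and at least one strictly higher, contradicting the lexicographic optimality of~$\varphi^*$.
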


\begin{proof}
Let $\mu = \{\mu^t\}_{t \in [T]}$ be the set of matchings returned across rounds by running $\IMWPM(H[N, \widetilde{J}^{\varphi^*}])$, and $\mu_i = \{\mu^1_i, \dots, \mu^T_i\} \subset \widetilde{J}^{\varphi^*}$ the set of meta-items that are matched to agent~$i$.
In other words, agent~$i$ receives a bundle $A_i = \bigcup_{t \in [T]} \mu_i^t$.

Note that \EFability follows since we consider each~$h \in J^{\varphi^*}$ as a single indivisible item.
By \Cref{prop:ind-chores:EF1+EFable}, \IMWPM produces an \EFable allocation when treating each non-dummy~$h$ as a single indivisible chore.
Observe that the utilities of the bundles $\{A_i\}_{i \in N}$ remain unchanged regardless of whether we treat~$h$ as an indivisible chore or a meta-chore.
Therefore, by \Cref{thm:hs_ef}, we have that $A$ is \EFable.

In the remainder of this proof, we will show that allocation~$A$ is also \EFone.
Denote by $h_1, \dots, h_\ell$ the meta-chores obtained by combining each meta-good~$M_i$ with its corresponding objective chore $c_{\varphi^*(i)} \in Z$.
We claim that each meta-chore~$h_i$ (i.e., $M_i \circ c_{\varphi^*(i)}$) is matched to an agent who considers~$M_i$ a meta-good (i.e., to some agent in~$T_i$).

\begin{claim}
\label{claim:RightAgent}
For each $i \in [\ell]$, we have $h_i \in \bigcup_{j \in T_i} \mu_j$.
\end{claim}

\begin{proof}[Proof of \Cref{claim:RightAgent}]
Suppose, for the sake of contradiction, that there exists a non-empty set
$$\textstyle
L = \left\{ i \in [\ell] \mid h_i \notin \bigcup_{j \in T_i} \mu_j \right\}.
$$
Construct an alternative injective map~$\vartheta$ as follows:
\begin{itemize}
\item For each~$i \in L$, detach~$M_i$ from~$c_{\varphi^*(i)}$.

\item Since $|Z| \geq n$, every agent is matched to at least one objective chore.
Then, for each~$i \in L$, we re-attach~$M_i$ to an objective chore currently matched to some agent~$j \in T_i$ (breaking ties arbitrarily), while keeping all other bundles unchanged.

\item For all~$j \notin L$, $\vartheta(j) = \varphi^*(j)$.
\end{itemize}
Note that $\vartheta$ is a valid injective map, since each~$M_i$ is attached to a distinct chore, given that the sets $T_i$'s are pairwise disjoint.

We now show that the value vector $\val(\IMWPM(H[N, \widetilde{J}^{\vartheta}]))$ is lexicographically higher than that under $\val(\IMWPM(H[N, \widetilde{J}^{\varphi^*}]))$.
Since $L \neq \emptyset$, there exists some~$i \in L$ such that $M_i \circ c_{\varphi^*(i)}$ is matched to an agent~$j \notin T_i$.
Under bundling $\widetilde{J}^\vartheta$ and matching sequence $\mu = \{\mu^t \}_{t \in [T]}$ produced by $\IMWPM(H[N, \widetilde{J}^{\varphi^*}])$, each matching value is weakly higher with at least one round strictly higher as for some~$i \in L$, $M_i$ has been unmatched to an agent who value it negatively and is now matched to an agent in~$T_i$ who value it non-negatively.
This contradicts lexicographical maximality of~$\varphi^*$.
\end{proof}

Given \Cref{claim:RightAgent}, \EFone follows since for any pair of agents~$i, i^* \in N$:
\begin{align*}
u_i(A_i \setminus \{\mu^T_i\}) = u_i(\{\mu^1_i, \dots, \mu^{T-1}_i\})
&= u_i(\mu^1_i) + u_i(\mu^2_i) + \cdots + u_i(\mu^{T-1}_i) \\
&\geq  u_i(\mu^2_{i^*}) + u_i(\mu^3_{i^*}) + \cdots + u_i(\mu^T_{i^*}) \\
&\geq  u_i(\mu^1_{i^*}) + u_i(\mu^2_{i^*}) + u_i(\mu^3_{i^*}) + \cdots + u_i(\mu^T_{i^*}) = u_i(A_{i^*}).
\end{align*}

If $\mu_i^T$ is a singleton chore, then the above analysis directly establishes \EFone.
Otherwise, suppose that $\mu_i^T$ is a meta-chore of the form $h_j = M_j \circ c_{\varphi^*(j)}$.
By \Cref{claim:RightAgent}, it follows that~$i \in T_j$, and hence $u_i(M_j) \geq 0$.
Consider removing the objective chore~$c_{\varphi^*(j)}$ from~$A_i$; we have
$$
u_i( A_i \setminus \{c_{\varphi^*(j)}\}) = \sum_{t = 1}^{T-1} u_i(\mu_i^t) + u_i(M_i) \geq u_i(A_i \setminus \{\mu^T_i\}).
$$
Thus, agent $i$'s utility after removing~$c_{\varphi^*(j)}$ remains at least as high as after removing~$\mu^T_i$, ensuring that \EFone holds in this case as well.
Therefore, allocation~$A$ is \EFone as desired.
\end{proof}

\subsection{Case~II: At Most~$n-1$ Remaining Objective Chores (\texorpdfstring{$|Z| \leq n-1$}{|Z| <= n-1})}
\label{sec:EFM:<n-chores}

In this case, we have a collection of meta-goods~$\{M_r\}_{r \in [\ell]}$ and a set of objective chores~$Z$ with $|Z| \leq n-1$.
Note that when $|Z| \leq n - 1$, there may not be enough chores to be combined with the meta-goods to ensure the resulting instance contains only chores.
In this case, some meta-goods must be allocated on their own.
This implies that, in order to allocate the meta-goods $\{M_r\}_{r \in [\ell]}$, we additionally require \emph{\goodMinimality}.

\begin{definition}[Good-minimality]
\label{def:good-minimality}
A non-empty subset~$M' \subseteq M$ of indivisible items is said to be \emph{\goodMinimal} if $u_i(M') \geq 0$ for some~$i \in N$ (i.e., $M'$ is a meta-good) and for all~$j \in N$ and $g \in \{o \in M' \mid u_j(o) \geq 0\}$, $u_j(M' \setminus \{g\}) \leq 0$.
\end{definition}

We show below the connection between \goodMinimality and \EFone.

\begin{lemma}
\label{lemma:good-minimality+EF1}
Consider any allocation $(A_i)_{i \in N}$ of items~$M$ and assume agent~$j$ receives a meta-good~$M'$ that is \goodMinimal.
If $u_i(A_i) \geq u_i(A_j \setminus M')$ for some~$i \in N$, then agent~$i$ is \EFone towards agent~$j$.
\end{lemma}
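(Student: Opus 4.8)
The plan is to exhibit, in each situation, an explicit item $t$ witnessing \EFone, and to let the structure of $M'$ dictate the choice. Throughout I use that $M' \subseteq A_j$ and $A_i \cap A_j = \emptyset$, so that deleting an item of $M'$ from $A_j$ leaves $A_i$ untouched, and that by additivity $u_i(A_j \setminus \{g\}) = u_i(A_j \setminus M') + u_i(M' \setminus \{g\})$ for every $g \in M'$. I would then split on whether agent~$i$ regards some item of $M'$ as a good.

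The main case is when $M'$ contains an item $g$ with $u_i(g) \geq 0$. Here I set $g^* \in \argmax_{g \in M'} u_i(g)$, so that $u_i(g^*) \geq 0$, i.e.\ $g^*$ lies in the set $\{o \in M' \mid u_i(o) \geq 0\}$. \goodMinimality of $M'$ (\Cref{def:good-minimality}, instantiated at agent~$i$) then forces $u_i(M' \setminus \{g^*\}) \leq 0$. Taking $t = g^* \in A_j$ as the removed item and combining the additivity identity with the hypothesis $u_i(A_i) \geq u_i(A_j \setminus M')$, I get
\[
u_i(A_j \setminus \{g^*\}) = u_i(A_j \setminus M') + u_i(M' \setminus \{g^*\}) \leq u_i(A_j \setminus M') \leq u_i(A_i) = u_i(A_i \setminus \{g^*\}),
\]
where the last equality holds because $g^* \notin A_i$. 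This is exactly the \EFone inequality with witness $g^*$.

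The remaining case is when $u_i(g) < 0$ for every $g \in M'$, so that agent~$i$ views all of $M'$ as chores. Then $u_i(M') \leq 0$, whence $u_i(A_j) = u_i(A_j \setminus M') + u_i(M') \leq u_i(A_j \setminus M') \leq u_i(A_i)$; that is, agent~$i$ is fully envy-free toward~$j$. Since $M' \neq \emptyset$ forces $A_j \neq \emptyset$, there is always an item available to remove, and I would finish by the standard observation that envy-freeness implies \EFone here: if $A_j$ contains an item $g$ with $u_i(g) \geq 0$ then $t = g$ works, and otherwise $A_j$ is entirely chores for~$i$, so deleting any single item of $A_j$ keeps $u_i(A_j \setminus \{t\}) \leq 0 \leq$, combined with $u_i(A_i) \geq u_i(A_j)$, again gives the \EFone inequality.

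The conceptual heart, and the only place \goodMinimality enters, is the main case: choosing the single most valuable item of $M'$ is precisely what \goodMinimality is designed to neutralize, since removing it drives the value of the remainder of $M'$ down to at most zero, collapsing $j$'s bundle (from $i$'s viewpoint) to at most $A_j \setminus M'$, which the hypothesis already controls. I expect the only obstacle to be bookkeeping rather than substance: confirming $g^* \in A_j \setminus A_i$ so that the removal affects the correct side, and disposing of the definitional corner where $M'$ is all chores for~$i$ (handled cleanly because $A_j$ is non-empty). No delicate estimates are required.
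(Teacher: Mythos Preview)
Your argument is essentially the paper's: the paper splits on the sign of $u_i(M')$ rather than on whether $M'$ contains a good for~$i$, but the core step---use \goodMinimality to find $g$ with $u_i(M'\setminus\{g\})\le 0$ and then chain $u_i(A_i)\ge u_i(A_j\setminus M')\ge u_i(A_j\setminus\{g\})$---is identical, and your main case is correct.

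There is, however, a real slip in your remaining case. When $A_j$ is entirely chores for~$i$ you claim that deleting \emph{any} $t\in A_j$ works because $u_i(A_j\setminus\{t\})\le 0$, and then appeal to $u_i(A_i)\ge u_i(A_j)$. But removing a chore from $A_j$ \emph{raises} its value, so $u_i(A_i)\ge u_i(A_j)$ does not yield $u_i(A_i)\ge u_i(A_j\setminus\{t\})$; and the dangling ``$\le 0\le$'' suggests you are tacitly assuming $u_i(A_i)\ge 0$, which is not granted. A concrete failure: take $M'=\{a,b\}$ with $u_i(a)=u_i(b)=-1$ (arrange another agent so that $M'$ is a \goodMinimal meta-good), $A_j=M'\cup\{d\}$ with $u_i(d)=-3$, and $A_i=\{c\}$ with $u_i(c)=-3$. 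The hypothesis holds ($-3\ge -3$) and $A_j$ is all chores for~$i$, yet removing $t=d$ gives $u_i(A_j\setminus\{d\})=-2>-3=u_i(A_i)$. The fix is immediate and reuses your own additivity identity: choose $t\in M'$ rather than an arbitrary item of $A_j$; since every element of $M'$ is a chore for~$i$, $u_i(M'\setminus\{t\})\le 0$ automatically, whence $u_i(A_j\setminus\{t\})=u_i(A_j\setminus M')+u_i(M'\setminus\{t\})\le u_i(A_j\setminus M')\le u_i(A_i)$. (The paper's own proof in this branch simply stops at ``agent~$i$ is envy-free towards agent~$j$'' and does not spell out an \EFone witness.)
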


\begin{toappendix}
\begin{proof}[Proof of \Cref{lemma:good-minimality+EF1}]
If $u_i(M') \leq 0$, we have $u_i(A_i) \geq u_i(A_j \setminus M') \geq u_i(A_j)$.
Thus, agent~$i$ is envy-free towards agent~$j$.
Else, $u_i(M') > 0$.
Since meta-good~$M'$ is \goodMinimal by the statement of the lemma, there must exist agent~$i$'s subjective good~$g \in M'$ such that $u_i(M' \setminus \{g\}) \leq 0$.
As a result,
\[
u_i(A_i) \geq u_i(A_j \setminus M') \geq u_i((A_j \setminus M') \cup (M' \setminus \{g\})) = u_i(A_j \setminus \{g\}),
\]
meaning that agent~$i$ is \EFone towards agent~$j$, as desired.
\end{proof}
\end{toappendix}

We will first consider the scenario with $|Z| = 0$ in \Cref{sec:EFM:<n-chores:=0} and design an algorithm that always produces an \EFone and \EFable allocation, which will prove useful for the scenario with $1 \leq |Z| \leq n-1$ in \Cref{sec:EFM:<n-chores:>0}.

\subsubsection{Case~II.1: $|Z| = 0$}
\label{sec:EFM:<n-chores:=0}

\begin{algorithm}[t]
\caption{Iterative Matching for Meta-Goods}
\label{alg:meta-goods}
\DontPrintSemicolon

\SetKwFunction{FMain}{\texttt{\textup{IMWM}}}
\KwIn{A set of meta-goods~$\{M_1, \dots, M_\ell\}$ and a (possibly empty) set of subjective goods~$X$ with $(\bigcup_{j \in [\ell]} M_j) \cup X = M$.}
\KwOut{An \EFone and \EFable allocation~$A = (A_1, \dots, A_n)$.}

\lForEach{$g \in X$}{
    Create the $(\ell + 1)$-th meta-good $M_{\ell + 1} \gets \{g\}$, and update $\ell \gets \ell + 1$. \label{alg:meta-goods:create-meta-goods}
}

\While(\tcp*[f]{Good-minimality}){$\exists j \in [\ell], i \in N, g \in M_j$ such that $u_i(g) \geq 0$ and $u_i(M_j \setminus \{g\}) > 0$}{ \label{alg:meta-goods:good-min-begin}
    Create the $(\ell + 1)$-th meta-good $M_{\ell + 1} \gets \{g\}$.\;
    $M_j \gets M_j \setminus \{g\}$\;
    $\ell \gets \ell + 1$\; \label{alg:meta-goods:good-min-end}
}

$J \gets \{M_1, \dots, M_\ell\}$.\;
$(A_1, \dots, A_n) \gets$ \FMain{$G[N, J]$}\;

\Return{allocation~$A$}

\BlankLine
\hrule
\BlankLine

\SetKwProg{Fn}{Function}{:}{\KwRet{}}
\Fn{\FMain{$G[N, J]$}}{ \label{alg:IMWM}
    $A_i \gets \emptyset$ for all~$i \in N$.\;
    $t \gets 1$; $J_1 \gets J$\;
    \While{$J_t \neq \emptyset$}{
        Compute a maximum-weight matching $\mu^t = \{(i, \mu_i^t)\}$ in $G[N, J_t]$.\;
        $A_i \gets A_i \cup \mu_i^t$ for all~$i \in N$.\;
        $J_{t+1} \gets J_t \setminus \bigcup_{i \in N} \mu_i^t$\;
        $t \gets t + 1$\;
    }

	\KwRet{Allocation A}
}
\end{algorithm}

We design \Cref{alg:meta-goods} to further process the output of \Cref{alg:itemmerge} and present the pseudocode of \Cref{alg:meta-goods} in a more general sense so that it can be used as a subroutine for certain cases encountered later in \Cref{sec:EFM:<n-chores:>0}.
At a high level, taking as input a set of $\ell$ meta-goods $\{M_1, \dots, M_\ell\}$ and a (possibly empty) set of subjective goods~$X$, \Cref{alg:meta-goods} first obtains an updated set of meta-goods each of which also satisfies \goodMinimality in \crefrange{alg:meta-goods:create-meta-goods}{alg:meta-goods:good-min-end}, and next it applies an Iterative Maximum-Weight Matching (\IMWM) procedure to allocate the meta-goods to agents~$N$.

Denote by $M_{\ell + 1}, M_{\ell + 2}, \dots, M_{\ell + |X|}$ the meta-goods representation of the $|X|$ subjective goods from~$X$, and update $\ell \gets \ell + |X|$ accordingly.
We now have a partition of items~$M$ into meta-goods $M_1, \dots, M_\ell$, which may not all satisfy \goodMinimality.
The algorithm identifies in \cref{alg:meta-goods:good-min-begin} an~$M_j$ violates \goodMinimality, i.e., there exists an agent~$i \in N$ and her subjective good~$g \in M_j$ such that $u_i(g) \geq 0$ and $u_i(M_j \setminus \{g\}) > 0$.
We then create the $(\ell + 1)$-th meta-good $M_{\ell + 1} \gets \{g\}$, as well as update $M_j \gets M_j \setminus \{g\}$ and $\ell \gets \ell + 1$.
Clearly, the updated~$M_j$ remains a meta-good.
Since $\ell$ strictly increases in each iteration of the \verb|while|-loop in \crefrange{alg:meta-goods:good-min-begin}{alg:meta-goods:good-min-end} and is upper bounded by $m$, the \verb|while|-loop will terminate.
At its termination, all~$M_j$'s satisfy \goodMinimality, we let $J=\{M_1, \dots, M_\ell\}$.

Given agents~$N$ and a set $J = \{M_1, \dots, M_\ell\}$ of meta-goods, let $G[N, J] = (N \cup J, E)$ denote a weighted bipartite graph where there is an edge~$(i, M_j)$ for some~$i \in N$ and~$M_j \in J$ if $u_i(M_j) \geq 0$; the edge has weight $u_i(M_j)$.\footnote{Note that $G[N, J]$ defined here is different from $H[N, J]$ defined in \Cref{sec:EFM:>=n-chores}.
Moreover, while we are inspired by the Iterative Matching Algorithm of \citet{BrustleDiNa20} for goods, the way we construct $G[N, J]$ is different from theirs.} For any subset $\widehat{J}\subseteq J$, let $G[N,\widehat{J}]$ denote the subgraph of $G[N,J]$ induced by $N$ and $\widehat{J}$.
In each round of \IMWM, each agent is matched to at most one meta-good.
For the first round, we set $J_1 = \{M_1,\dots, M_\ell\}$.
In each round~$t$, we find a maximum-weight matching~$\mu^t$ in~$G[N, J_t]$ and agent~$i$ is matched to item~$\mu^t_i$.\footnote{When there are multiple maximum-weight matchings, we can break ties in favour of the matching with larger cardinality size.
The tie-breaking rule is only intended to describe \Cref{alg:meta-goods} succinctly; our analysis of \EFone and \EFability does not hinge on tie-breaking rules.}
If agent~$i$ does not match to any item, let $\mu^t_i = \emptyset$ for the ease of expression.
Note that once an agent is matched to nothing (assuming the $t$-th round), it means that all the items in~$J_{t+1}$ are chores for the agent.
Put differently, each agent will only be allocated meta-goods of non-negative values for them.
The \IMWM procedure terminates when all items are matched to agents.
Since each meta-good~$M_j \in \{M_1, \dots, M_\ell\}$ is valued non-negatively by at least one agent, \IMWM will terminate.

We are now ready to establish the following statement.

\begin{proposition}
\label{prop:EF1+EFable:<n-chores:=0}
Taking as input a set of meta-goods $\{M_j\}_{j \in [\ell]}$ and a (possibly empty) set of subjective goods~$X$ with $X \cup \bigcup_{j \in [\ell]} M_j = M$, \Cref{alg:meta-goods} always finds an \EFone and \EFable allocation.
\end{proposition}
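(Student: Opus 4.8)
The plan is to establish the two properties separately, leveraging the machinery already developed for the chores case in \Cref{prop:ind-chores:EF1+EFable}. First I would verify that the preprocessing in \crefrange{alg:meta-goods:create-meta-goods}{alg:meta-goods:good-min-end} terminates and produces a valid partition of $M$ into meta-goods $\{M_1, \dots, M_\ell\}$ each of which is \goodMinimal; this is essentially argued in the surrounding text, so the real content lies in analyzing the output of \IMWM on $G[N, J]$. Since $u_i(A_i) = \sum_t u_i(\mu_i^t)$ by additivity and each $h \in J$ is treated as a single indivisible object, the utilities of the final bundles do not depend on whether we view a meta-good as a set or as an atom, so it suffices to reason about \IMWM over atoms valued non-negatively by at least one agent.

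For \EFability, I would mirror the cycle argument from the proof of \Cref{prop:ind-chores:EF1+EFable}: by \Cref{thm:hs_ef} it is enough to show the envy graph $\mathcal{G}_A$ has no positive-weight cycle. Writing any cycle weight as $w_A(\gamma) = \sum_{t} \sum_{(i,j) \in \gamma} w_{\mu^t}(i,j)$ via additivity, I would argue that each round's inner sum is nonpositive. The one subtlety versus the perfect-matching case is that \IMWM uses a (non-perfect) maximum-weight matching, so some agents may be unmatched in round $t$; I must confirm that permuting bundles along the cycle $\pi_\gamma$ still yields a feasible matching of weight $w(\mu^t) + \sum_{(i,j)\in\gamma} w_{\mu^t}(i,j)$, which it does because $\pi_\gamma$ only reshuffles the meta-goods already matched in round $t$ among the cycle's agents. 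Hence each round's cycle contribution is at most $0$ and $w_A(\gamma) \le 0$.

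For \EFone I would exploit the maximum-weight property together with \goodMinimality. The key observation is the analogue of the round-ordering fact: for any agent $i$ and any meta-good $M' \in J_{t+1}$ with $u_i(M') \ge 0$, we have $u_i(\mu_i^t) \ge u_i(M')$, since otherwise swapping would improve the matching. Combined with the fact that each agent is only ever matched to meta-goods she values non-negatively (once she drops out she values everything remaining negatively), this lets me bound $u_i(A_i)$ against the positive contributions of $A_j$. For a fixed pair $i, j$, let $M'$ be the (unique, by \Cref{lem:IterMerge}-style disjointness reasoning, or otherwise the relevant) meta-good in $A_j$ that $i$ values most, and show $u_i(A_i) \ge u_i(A_j \setminus M')$ by matching up rounds; then \Cref{lemma:good-minimality+EF1} immediately upgrades this to \EFone towards $j$, since every $M_j$ output by the while-loop is \goodMinimal.

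The main obstacle I anticipate is the \EFone round-alignment argument when agents drop out of the matching at different rounds, making the clean ``shift by one round'' telescoping from the perfect-matching proof inexact. Because matchings here are not perfect, agent $i$ might receive nothing in some early round while $j$ receives a good, so I cannot simply compare $\mu_i^t$ with $\mu_j^{t+1}$ across all $t$. The right fix is to identify, for the pair $(i,j)$, the single meta-good $M'$ in $A_j$ whose removal is needed, and to show $u_i(A_i) \ge u_i(A_j \setminus \{M'\})$ directly from the per-round inequality $u_i(\mu_i^t) \ge u_i(M')$ applied to each positively-valued $M' \in A_j$ that was available when $i$ was still matched; isolating exactly which good plays the role of the removed item, and handling the rounds after $i$ drops out (where $i$ values the remaining goods negatively, so they only help), is where the care is required. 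Once that inequality is in hand, invoking \goodMinimality via \Cref{lemma:good-minimality+EF1} closes the argument cleanly.
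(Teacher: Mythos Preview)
Your plan is correct and matches the paper's proof. The \EFability argument is identical: decompose the cycle weight over rounds and invoke maximality of each~$\mu^t$.

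For \EFone, however, the obstacle you anticipate does not actually arise. With the convention $\mu_i^t=\emptyset$ (value~$0$) when~$i$ is unmatched in round~$t$, the shift-by-one inequality $u_i(\mu_i^t)\ge u_i(\mu_j^{t+1})$ holds for \emph{every} $t\in[T-1]$: trivially if $\mu_j^{t+1}=\emptyset$, and by maximality of~$\mu^t$ otherwise (since $\mu_j^{t+1}\in J_{t+1}$). Together with $u_i(\mu_i^T)\ge 0$, this telescopes directly to $u_i(A_i)\ge u_i(A_j\setminus\mu_j^1)$, and then \Cref{lemma:good-minimality+EF1} finishes. The meta-good to remove is always~$\mu_j^1$; there is no need to search for ``the relevant'' one or to treat drop-out rounds specially. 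Relatedly, your appeal to \Cref{lem:IterMerge}-style disjointness is misplaced here: the \goodMinimality loop in \crefrange{alg:meta-goods:good-min-begin}{alg:meta-goods:good-min-end} can split a meta-good into several, so an agent may well value multiple elements of~$J$ non-negatively---but, as just noted, uniqueness is not needed for the argument.
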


\begin{proof}
At the end of the first \verb|while|-loop (\crefrange{alg:meta-goods:good-min-begin}{alg:meta-goods:good-min-end}) of \Cref{alg:meta-goods}, each meta-good~$M_j$ in $J= \{M_1, \dots, M_\ell\}$ is \goodMinimal.
It suffices to show that the output allocation~$A = (A_1, \dots, A_n)$ of $\IMWM(G[N, J])$ satisfies both \EFone and \EFability.
Suppose that \IMWM terminates in~$T$ rounds.
For each round~$t \in [T]$, let $\mu^t$ denote the matching in this round; moreover, each agent~$i \in N$ is matched to a meta-good~$\mu^t_i$.
Note that $\mu^t_i = \emptyset$ if the agent does not match to anything.
At the end of \IMWM, each agent~$i \in N$ receives a bundle $A_i = \bigcup_{t \in [T]} \mu_i^t$.

\medskip
\noindent\textbf{Allocation~$A$ is \EFable.}
By \Cref{thm:hs_ef}, it suffices to show that envy graph~$\mathcal{G}_A$ has no positive-weight cycles.
Take any directed cycle~$\gamma$ in~$\mathcal{G}_{A}$ and assume without loss of generality that $\gamma$ involves a sequence of nodes/agents $1, 2, \dots, k$ for some~$k \geq 2$.
We have
\begin{align*}
w_{A}(\gamma) = \sum_{(i, j) \in \gamma} w_{A}(i, j) = \sum_{(i, j) \in \gamma} \left( u_i(A_j) - u_i(A_i) \right)
&= \sum_{(i, j) \in \gamma} \sum_{t = 1}^T \left( u_i(\mu^t_j) - u_i(\mu^t_i) \right) \\
&= \sum_{(i, j) \in \gamma} \sum_{t = 1}^T w_{\mu^t}(i, j) = \sum_{t = 1}^T \sum_{(i, j) \in \gamma} w_{\mu^t}(i, j)
\end{align*}

Let $\pi_\gamma$ be the permutation of~$N$ under which $\pi_\gamma(i) = i + 1$ for each $i \in [k-1]$, $\pi_\gamma(k) = 1$, and $\pi_\gamma(i) = i$ for $i \in N \setminus [k]$.
In each round~$t$, $\mu^t$  is a maximum-weight matching; therefore, $\sum_{(i, j) \in \gamma} w_{\mu^t}(i, j) \leq 0$.
Otherwise, giving each~$i$ the item $\mu^t_{\pi_\gamma(i)}$ results in a matching of strictly larger weight, a contradiction.
Thus, $w_A(\gamma) \leq 0$, implying that allocation~$A$ is \EFable.

\medskip
\noindent\textbf{Allocation~$A$ is \EFone.}
Observe that $u_i(\mu_i^t) \geq u_i(g)$ for any meta-good~$g \in J_{t+1}$; otherwise, replacing the edge~$(i, \mu^t_i)$ with~$(i, g)$ in~$\mu^t$ results in a matching of greater weight in~$G[N, J_t]$.
For any~$i, j \in N$,
\begin{align*}
u_i(A_i) = u_i(\{\mu_i^1, \dots, \mu_i^T\}) = u_i(\mu_i^1) + \cdots + u_i(\mu_i^{T-1}) + u_i(\mu_i^T)
&\geq u_i(\mu_i^1) + \cdots + u_i(\mu_i^{T-1}) \\
&\geq u_i(\mu_j^2) + \cdots + u_i(\mu_j^T).
\end{align*}
If $u_i(\mu_j^1) \leq 0$, clearly $u_i(A_i) \geq u_i(\mu_j^1) + u_i(\mu_j^2) + \cdots + u_i(\mu_j^T) = u_j(A_j)$, meaning that agent~$i$ is envy-free towards agent~$j$.
Else, $u_i(\mu_j^1) > 0$, meaning that $u_i(A_i) \geq u_i(A_j \setminus \mu_j^1)$.
In other words, after removing meta-good~$\mu^1_j$ from agent~$j$'s bundle, agent~$i$ would be envy-free towards~$j$.
By \Cref{lemma:good-minimality+EF1}, agent~$i$ is \EFone towards~$j$, as desired.
\end{proof}

\subsubsection{Case~II.2: \texorpdfstring{$1 \leq |Z| \leq n-1$}{1 <= |Z| <= n-1}}
\label{sec:EFM:<n-chores:>0}

\begin{algorithm}[t]
\caption{Good-Minimal and Chore-Maximal Refinement}
\label{alg:chore-max-good-min}
\DontPrintSemicolon

\KwIn{Meta-goods $\{M_j\}_{j \in [\ell]}$ and chores~$Z$ with $ 1 \leq |Z| \leq n-1$ outputted by \Cref{alg:itemmerge}.}
\KwOut{An updated set of meta-goods $\{M_1, \dots, M_{\widehat{\ell}}\}$ that are both \choreMaximal and \goodMinimal, a set of subjective goods~$X$, and an updated set of objective chores~$\widehat{Z}$.}

$X \gets \emptyset$\;
$\mathcal{M} \gets \{M_1, \dots, M_\ell\}$\;
$\widehat{Z} \gets Z$\;
$\widehat{\ell} \gets \ell$\;

\While{$\exists M_j \in \mathcal{M}, i \in N, g \in M_j$ such that $u_i(g) \geq 0$ and $u_i(M_j \setminus \{g\}) > 0$
\textup{\textbf{or}} \nonl\\
\makebox[\widthof{\textbf{while}}]{}
\textup{(}$|\widehat{Z}| \neq 0$ \textup{\textbf{and}} $\exists i \in N, S \subseteq X \cup \mathcal{M}, c \in \widehat{Z}$ such that $u_i(S \cup \{c\}) \geq 0$\textup{)}}{ \label{alg:chore-max-good-min:bundle-condition}
    \eIf(\tcp*[f]{Good minimality.}){$\exists M_j \in \mathcal{M}, i \in N, g \in M_j$ s.t.\ $u_i(g) \geq 0$ and $u_i(M_j \setminus \{g\}) > 0$}{
        $M_j \gets M_j \setminus \{g\}$\;
        $X \gets X \cup \{g\}$\;
    }(\tcp*[h]{$|\widehat{Z}| \neq 0$ and $\exists i \in N, S \subseteq X \cup \mathcal{M}, c \in \widehat{Z}$ such that $u_i(S \cup \{c\}) \geq 0$}){
        $\widehat{\ell} \gets \widehat{\ell} + 1$\;
        $M_{\widehat{\ell}} \gets S \cup \{c\}$\;
        $\mathcal{M} \gets (\mathcal{M} \setminus S) \cup \{M_{\widehat{\ell}}\}$\;
        $X \gets X \setminus S$\;
        $\widehat{Z} \gets \widehat{Z} \setminus \{c\}$\;
    }

    \While(\tcp*[f]{Merging.}){$\exists i' \in N, M_r, M_{r'} \in \mathcal{M}$ such that $u_{i'}(M_r) \geq 0$ and $u_{i'}(M_{r'})$}{ \label{alg:chore-max-good-min:merge-begin}
        $M_r \gets M_r \cup M_{r'}$ \tcp*{Merge~$M_{r'}$ into~$M_r$; get an updated meta-good~$M_r$.}
        $\mathcal{M} \gets \mathcal{M} \setminus \{M_{r'}\}$\; \label{alg:chore-max-good-min:merge-end}
    }
}

$\widehat{\ell} \gets |\mathcal{M}|$\;
Relabel all~$M_j \in \mathcal{M}$ such that we have the updated set of meta-goods $\{M_1, \dots, M_{\widehat{\ell}}\}$.\;

\Return{meta-goods $\{M_1, \dots, M_{\widehat{\ell}}\}$, subjective goods~$X$, and objective chores~$\widehat{Z}$}
\end{algorithm}

We now consider the case where \Cref{alg:itemmerge} outputs meta-goods $\{M_r\}_{r \in [\ell]}$ that are chore-maximal and a set of objective chores~$Z$ with $1 \leq |Z| \leq n-1$.
We design \Cref{alg:chore-max-good-min} to further process those outputs and intend to find a partition of~$M$ into meta-good~$\{M_1, \dots, M_{\widehat{\ell}}\}$, a set of subjective goods~$X = \{g_1, \dots, g_s\}$, and a set of objective chores~$\widehat{Z}$ such that the following properties hold:
\begin{itemize}
\item Each element of $\{M_1, \dots, M_{\widehat{\ell}}\}$ is both \goodMinimal and \choreMaximal.
For each~$r \in \{1, \dots, \widehat{\ell}\}$, define $T_r \coloneqq \{i \in N \mid u_i(M_r) \geq 0\}$.
We have $T_i \cap T_j = \emptyset$ for any pair of~$i, j \in \{1, \dots, \widehat{\ell}\}$ with $i \neq j$.

\item For all~$i \in N$, $S \subseteq \{g_1, \dots, g_s, M_1, \dots, M_{\widehat{\ell}}\}$, and $c \in \widehat{Z}$, we have $u_i(\{c\} \cup S) < 0$.
\end{itemize}

To achieve the above goals, \Cref{alg:chore-max-good-min} takes as input the set of meta-goods $\{M_1, \dots, M_\ell\}$ and the set of objective chores~$Z$ outputted by \Cref{alg:itemmerge}.
Initialize $X \gets \emptyset$, which will contain the subjective goods being removed from~$\bigcup_{j \in [\ell]} M_j$ during the execution of \Cref{alg:chore-max-good-min}, $\mathcal{M} \gets \{M_1, \dots, M_\ell\}$, which is the initial set of the input meta-goods, $\widehat{Z} \gets Z$, which is the initial set of the input objective chores, and an index~$\widehat{\ell} \gets \ell$.
At a high level, the \verb|while|-loop-condition in \cref{alg:chore-max-good-min:bundle-condition} examines if either of the following two conditions holds:
\begin{itemize}
\item there exists~$M_j \in \mathcal{M}$, $i \in N$ and~$g \in M_j$ such that $u_i(g) \geq 0$ and $u_i(M_j \setminus \{g\}) > 0$, meaning that $M_j$ violates \goodMinimality, or

\item $\widehat{Z}$ is non-empty and moreover, there exists~$c \in \widehat{Z}$ and $S \subseteq X \cup \mathcal{M}$ such that $u_i(S \cup \{c\}) \geq 0$.\footnote{Note that this condition also checks if some~$M_j$ violates \choreMaximality, by simply taking $S \gets M_j$.}
\end{itemize}
In either case, we will modify~$\mathcal{M}$.
We maintain that each agent values at most one~$M_j \in \mathcal{M}$ non-negatively.
Each iteration of the \verb|while|-loop in \crefrange{alg:chore-max-good-min:merge-begin}{alg:chore-max-good-min:merge-end} merges two meta-goods~$M_r$ and~$M_{r'}$ into~$M_r$ if some agent~$i' \in N$ values both~$M_r$ and~$M_{r'}$ non-negatively, and removes~$M_{r'}$ from~$\mathcal{M}$.
Clearly, the updated~$M_r$ remains a meta-good.
Since $|\mathcal{M}|$ strictly decreases, the \verb|while|-loop will terminate.
At its termination, each agent values at most one $M_j \in \mathcal{M}$ non-negatively.

In the case that the \verb|if|-statement is executed, \Cref{alg:chore-max-good-min} identifies an~$M_j$, an agent~$i \in N$ and her subjective good~$g \in M_j$ such that $u_i(g) \geq 0$ and $u_i(M_j \setminus \{g\}) > 0$.
We then update~$M_j \gets M_j \setminus \{g\}$ and $X \gets X \cup \{g\}$ by moving~$g$ from~$M_j$ to~$X$.
Clearly, the updated~$M_j$ remains a meta-good.

In the case that the \verb|else|-statement is executed, \Cref{alg:chore-max-good-min} identifies an agent~$i \in N$, a subset of~$S \subseteq X \cup \mathcal{M}$ and an objective chore~$c \in \widehat{Z}$ such that $u_i(S \cup \{c\}) \geq 0$.
We first update $\widehat{\ell} \gets \widehat{\ell} + 1$ and next create the meta-good $M_{\widehat{\ell}} \gets S \cup \{c\}$.
Clearly, $M_{\widehat{\ell}}$ is a meta-good.
Also, we update $\mathcal{M} \gets (\mathcal{M} \setminus S) \cup \{M_{\widehat{\ell}}\}$ by removing those meta-goods contained in~$S$ and adding the new meta-good~$M_{\widehat{\ell}}$ to~$\mathcal{M}$, as well as update $X \gets X \setminus S$ and $\widehat{Z} \gets \widehat{Z} \setminus \{c\}$ accordingly.

In each iteration of the outer \verb|while|-loop (\crefrange{alg:chore-max-good-min:bundle-condition}{alg:chore-max-good-min:merge-end}), either the \verb|if|- or the \verb|else|-statement is executed.
First, $|\widehat{Z}|$ strictly decreases when executing the \verb|else|-statement and stays the same when \crefrange{alg:chore-max-good-min:bundle-condition}{alg:chore-max-good-min:merge-end} performing all other steps.
The number of the \verb|else|-statement being executed is bounded.
Next, between any two consecutive executions of the \verb|else|-statement, the cardinality size of $X \cup \bigcup_{M_j \in \mathcal{M}} M_j$, which is equal to $|M \setminus \widehat{Z}|$, stays the same.
Since $|X|$ strictly increases when executing the \verb|if|-statement, the number of the \verb|if|-statement being executed between any two consecutive executions of the \verb|else|-statement is bounded.
Together with the fact that the inner \verb|while|-loop (\crefrange{alg:chore-max-good-min:merge-begin}{alg:chore-max-good-min:merge-end}) also terminates when being executed, we conclude that the outer \verb|while|-loop (\crefrange{alg:chore-max-good-min:bundle-condition}{alg:chore-max-good-min:merge-end}) always terminates.

At the termination of the outer \verb|while|-loop (\crefrange{alg:chore-max-good-min:bundle-condition}{alg:chore-max-good-min:merge-end}), if $|\widehat{Z}| = 0$, the instance can be processed by \Cref{alg:meta-goods} which have been discussed in \Cref{sec:EFM:<n-chores:=0}.
Otherwise, we have $|\widehat{Z}| \geq 1$.
Since the number of objective chores only decrease in our algorithm, we have $|\widehat{Z}| \leq n-1$.
To summarize, after relabelling all meta-goods~$M_j \in \mathcal{M}$ to obtain the updated set of meta-goods $\{M_1, \dots, M_{\widehat{\ell}}\}$, we have the following lemma.

\begin{lemma}
\label{lem:matchProprty}
Under Case~II.2, \Cref{alg:chore-max-good-min} outputs a partition of the item set~$M$ into meta-goods $\{M_1, \dots, M_{\widehat{\ell}}\}$, a set of subjective goods $X = \{g_1, \cdots, g_s\}$, and a set of objective chores $\widehat{Z} = \{c_1, \dots, c_k\}$ with $1 \leq k \leq n-1$.
These sets satisfy the following properties:
\begin{enumerate}[label=P\arabic*)]
\item $T_i \cap T_j = \emptyset$ for all $i \neq j$, where $T_j \coloneqq \{i \in N \mid u_i(M_j) \geq 0\}$.
\item Each element of $\{M_1, \dots, M_{\widehat{\ell}}\}$ is both \goodMinimal and \choreMaximal.
\item For every objective chore~$c \in \widehat{Z}$, every agent~$i \in N$, and every subset $S \subseteq \{g_1, \dots, g_s, M_1, \dots, M_{\widehat{\ell}}\} $, we have $u_i(\{c\} \cup S) < 0$.
\end{enumerate}
\end{lemma}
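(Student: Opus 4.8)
The plan is to establish all four assertions—that the output is a partition of $M$ with $1 \le k \le n-1$, together with properties P1)--P3)—by maintaining a short list of invariants through the execution of \Cref{alg:chore-max-good-min}, and then reading off P2) and P3) directly from the negated outer loop guard at termination (termination itself having already been argued in the surrounding text).

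First I would handle the partition claim and the cardinality bound. Since the input is the output of \Cref{alg:itemmerge}, we begin with $X \cup \bigl(\bigcup_{M_j \in \mathcal{M}} M_j\bigr) \cup \widehat{Z} = M$ and the three parts disjoint. I would then check that each of the three operations relocates items among these parts without creating or deleting any: the \verb|if|-branch moves a good $g$ from some $M_j$ into $X$; the \verb|else|-branch replaces $S \subseteq X \cup \mathcal{M}$ and $c \in \widehat{Z}$ by the single set $S \cup \{c\}$; and a merge step replaces $M_r, M_{r'}$ by $M_r \cup M_{r'}$. Hence the partition invariant is preserved, giving the partition claim. For the bound, $|\widehat{Z}|$ is nonincreasing and bounded above by $|Z| \le n-1$, while the lemma is invoked precisely in the branch where $|\widehat{Z}| \ge 1$ at termination, so $1 \le k = |\widehat{Z}| \le n-1$.

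The core of the argument is P1), for which I would carry the invariant that every agent values at most one element of $\mathcal{M}$ nonnegatively (equivalently, the sets $T_j$ are pairwise disjoint), together with the companion invariant that every element of $\mathcal{M}$ is a genuine meta-good. The companion invariant is immediate from the guards: the \verb|if|-branch removes $g$ from $M_j$ only when $u_i(M_j \setminus \{g\}) > 0$, the \verb|else|-branch forms $S \cup \{c\}$ only when some $u_i(S \cup \{c\}) \ge 0$, and the union of two meta-goods is a meta-good. For P1) itself, the invariant holds initially by \Cref{lem:IterMerge}. The \verb|if|-branch cannot increase, for any agent, the number of nonnegatively-valued meta-goods, since removing a good only lowers $M_j$'s value; hence it preserves the invariant. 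The \verb|else|-branch is the one step that may transiently violate it, since the freshly bundled $S \cup \{c\}$ could be valued nonnegatively by an agent who already values a surviving meta-good nonnegatively; this is exactly what the inner merge loop (\crefrange{alg:chore-max-good-min:merge-begin}{alg:chore-max-good-min:merge-end}) repairs, merging any two meta-goods that share a nonnegative valuer and terminating because $|\mathcal{M}|$ strictly decreases. Since every outer iteration ends with this merge loop, the invariant is re-established by the end of each iteration and therefore holds at termination, yielding P1).

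Finally I would read off P2) and P3) from the negated outer guard. As the guard is a disjunction, both disjuncts fail: the failure of the first says no $M_j$ admits a good $g$ with $u_i(g) \ge 0$ and $u_i(M_j \setminus \{g\}) > 0$, which is precisely good-minimality of every meta-good; and since $|\widehat{Z}| \ge 1$ in this case, the failure of the second forces $u_i(S \cup \{c\}) < 0$ for all $i \in N$, all $S \subseteq X \cup \mathcal{M}$, and all $c \in \widehat{Z}$, which is exactly P3). Chore-maximality of each $M_r$—the remaining half of P2)—then follows from P3) by specializing to $S = \{M_r\}$, so that attaching any remaining objective chore $c \in \widehat{Z}$ gives $u_j(M_r \cup \{c\}) < 0$ for every $j$. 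The main obstacle I anticipate is the P1) argument: one must argue carefully that the \verb|else|-branch is the only source of violations and that the nested merge loop both terminates and fully restores disjointness of the $T_j$ within a single outer iteration, so that the invariant—though transiently broken—is always re-established before the outer guard is next evaluated.
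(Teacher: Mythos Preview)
Your overall approach matches the paper's: the paper does not give a separate formal proof of this lemma, and the justification is the running discussion preceding the statement, which establishes exactly the invariants you describe (the sets partition $M$, each $M_j$ remains a meta-good, the inner merge loop restores disjointness of the $T_j$, and P2)--P3) are read off the negated outer guard together with $|\widehat{Z}|\ge 1$).

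There is, however, one genuine slip in your P1) argument. You write that the \verb|if|-branch ``cannot increase, for any agent, the number of nonnegatively-valued meta-goods, since removing a good only lowers $M_j$'s value.'' This is false: the condition only guarantees $u_i(g)\ge 0$ for the \emph{particular} agent $i$ witnessing the good-minimality violation. Another agent $i'$ may have $u_{i'}(g)<0$, in which case $u_{i'}(M_j\setminus\{g\}) = u_{i'}(M_j) - u_{i'}(g) > u_{i'}(M_j)$, and this can push $M_j\setminus\{g\}$ from negative to nonnegative for $i'$. If $i'$ already valued some other $M_r$ nonnegatively, the disjointness of the $T_j$ is now violated. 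So the \verb|if|-branch can transiently break P1) just as the \verb|else|-branch can.

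The fix is exactly what you already do for the \verb|else|-branch (and what the paper does implicitly): do not claim that the \verb|if|-branch preserves the invariant, and instead rely on the inner merge loop, which runs at the end of \emph{every} outer iteration, to restore it. With that correction your argument goes through unchanged, and your derivations of the partition claim, the bound $1\le k\le n-1$, good-minimality, P3), and chore-maximality (via specializing P3) to $S=\{M_r\}$ for $c\in\widehat{Z}$) are all in line with the paper.
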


Let $\mathcal{P} \coloneqq \{g_1, \dots, g_s, M_1, \dots, M_{\widehat{\ell}}\}$ denote the set of subjective goods and meta-goods outputted by \Cref{alg:chore-max-good-min}.
Since $|Z| = k \leq  n-1$, we introduce $n-k$ dummy chores $d_1, \dots, d_{n-k}$, each of which is valued at zero by every agent. Define
$$
\widetilde{J}(S_1, \dots, S_k) = \left( \bigcup_{a \in [n-k]} d_a \right) \cup \left( \bigcup_{i \in [k]} c_i \circ S_i \right),
$$
where $S_i \subseteq \mathcal{P}$ and the sets~$S_i$'s are pairwise disjoint, i.e.,  $S_i \cap S_j = \emptyset$ for all distinct $i, j \in [k]$.\footnote{Note, $S_i$ could be empty and we do \emph{not} require $\bigcup_{i \in [k]} S_i = \mathcal{P}$.}
It is worth noting that for each~$i \in [k]$, $c_i \circ S_i$ is a meta-chore by \Cref{lem:matchProprty}.
By construction, $|\widetilde{J}(S_1, \dots, S_k)| = n$ for any such collection of sets $\{S_i\}_{i \in [k]}$.

For each fixed sets $\{S_i\}_{i \in [k]}$, consider a complete weighted bipartite graph $H[N, \widetilde{J}(S_1, \dots, S_k)]$.\footnote{This graph is defined analogously to that in \Cref{sec:EFM:>=n-chores}.
Each edge $(i, h)$, where $h \in \widetilde{J}(S_1, \dots, S_k)$, has weight $u_i(h)$.
Here, each~$h$ is either a dummy item or a meta-chore of the form $c_r \circ S_r$ for some $r \in [k]$.}
Let $\val(\mu)$ denote the value of the maximum-weight perfect matching on $H[N, \widetilde{J}(S_1, \dots, S_k)]$.
Among all possible choices of $\{S_i\}_{i \in [k]}$, let $\{S^*_i\}_{i \in [k]}$ be the collection that maximizes $\val(\mu)$, breaking ties in favor of maximizing $\sum_{i \in [k]} |S_i|$.
Let $\mu^*$ denote the corresponding maximum-weight perfect matching on the graph $H[N, \widetilde{J}(S^*_1, \dots, S^*_k)]$.

Denote $N^d \coloneqq \{i \in N \mid \mu^*_i \in \bigcup_{a \in [n-k]} d_a\}$ as the set of agents who were matched to a dummy item in $\mu^*$.
By definition, $N \setminus N^d$ are agents who are matched to a meta-chore of the form $c_i \circ S^*_i$ for some~$i \in [k]$.
For notation convenience, we relabel the agents and bundles as follows.
We index the agents in $N \setminus N^d$ by $\{1, \dots, k\}$, so that each agent~$i \in \{1, \dots, k\}$ is matched to $c_i \circ S^*_i$ under~$\mu^*$.
Similarly, we relabel agents in~$N^d$ as $\{k+1, \dots, n\}$.

\begin{proposition}
\label{prop:EF1+EFable:<n-chores:>0}
Under Case~II.2, the allocation which assigns each agent $i \in N \setminus N^d$ the bundle $\{c_i\} \cup S^*_i$ and allocates the remaining items $\mathcal{P} \setminus \bigcup_{j \in [k]} S^*_j$ among agents in~$N^d$ according to \Cref{alg:meta-goods}, is both \EFone and \EFable.
\end{proposition}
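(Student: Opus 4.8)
The plan is to split the agents into two groups and then glue the analyses together. Write $P := N\setminus N^d = \{1,\dots,k\}$ for the agents matched under $\mu^*$ to meta-chores $c_i\circ S^*_i$, and $Q := N^d=\{k+1,\dots,n\}$ for the agents receiving the leftover items $\mathcal{P}\setminus\bigcup_j S^*_j$ via \Cref{alg:meta-goods}. Everything rests on two structural consequences of the joint optimality defining $\{S^*_i\}$ and $\mu^*$ (maximize $\val(\mu)$, then break ties by $\sum_i|S_i|$). First, a \emph{separation} fact: for every $i\in P$ and every leftover item $p\in\mathcal{P}\setminus\bigcup_j S^*_j$ we have $u_i(p)<0$. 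Indeed, if $u_i(p)\ge 0$ one could move $p$ into $S^*_i$; by property P3 of \Cref{lem:matchProprty} the bundle $c_i\circ(S^*_i\cup\{p\})$ is still a meta-chore and the $S$'s stay pairwise disjoint, so evaluating $\mu^*$ on the new configuration weakly raises its value, strictly if $u_i(p)>0$ (contradicting $\val$-maximality) and with equal value but larger $\sum_i|S_i|$ if $u_i(p)=0$ (contradicting the tie-break). Second, a \emph{nonnegative own-goods} fact: for every $i\in P$, $u_i(S^*_i)\ge 0$; otherwise detaching $S^*_i$ and matching $i$ to $\{c_i\}$ alone (returning $S^*_i$ to the leftover pool) raises agent $i$'s value under $\mu^*$ by $-u_i(S^*_i)>0$ while leaving all other bundles unchanged, again contradicting $\val$-maximality.

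Separation yields exactly the consequences I need. By additivity, $u_i(A_j)\le 0$ for all $i\in P$ and $j\in Q$; every leftover meta-good or subjective good is valued nonnegatively only by agents in $Q$, so it remains a meta-good for the sub-population $Q$ (hence \Cref{alg:meta-goods} runs correctly on $Q$), while in the goods graph $G[N,\cdot]$ every $i\in P$ is an isolated vertex; and, together with P3, $u_i(A_j)<0$ whenever $A_j$ is a meta-chore. From \Cref{prop:EF1+EFable:<n-chores:=0} applied to the instance on agents $Q$ and the leftover items I also record that each $i\in Q$ gets $u_i(A_i)\ge 0$ and that the $Q$-allocation is \EFone and \EFable among $Q$.

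For \EFability I would view the whole allocation as a single iterative maximum-weight matching and invoke the by-now-standard cycle summation used in \Cref{prop:ind-chores:EF1+EFable,prop:EF1+EFable:<n-chores:=0}. Set $\mu^1:=\mu^*$, the max-weight perfect matching on $H[N,\widetilde{J}(S^*_1,\dots,S^*_k)]$, and let $\mu^2,\mu^3,\dots$ be the successive max-weight matchings that \Cref{alg:meta-goods} computes on the leftover meta-goods in $G$. Since the $P$-agents are isolated in $G$, these later matchings only ever match $Q$-agents, so $A_i=\bigcup_t\mu^t_i$ reproduces exactly the claimed allocation (dummies contribute $0$). For any directed cycle $\gamma$ in $\mathcal{G}_A$ we then have $w_A(\gamma)=\sum_t\sum_{(i,j)\in\gamma}w_{\mu^t}(i,j)$, and each inner sum is $\le 0$ because permuting the round-$t$ assignment around $\gamma$ produces a (perfect, resp.\ partial) matching of weight at most $\val(\mu^t)$. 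Thus $\mathcal{G}_A$ has no positive-weight cycle and $A$ is \EFable by \Cref{thm:hs_ef}. For \EFone I check ordered pairs $(i,i^*)$ by group: if $i\in Q$ and $i^*\in P$ then $u_i(A_i)\ge 0>u_i(A_{i^*})$, so $i$ is envy-free; if $i,i^*\in Q$ this is the $Q$-allocation guarantee of \Cref{prop:EF1+EFable:<n-chores:=0}; and if $i\in P$ (towards any $i^*$) then $u_i(A_{i^*})\le 0$, while removing the single objective chore $c_i$ from $i$'s own bundle gives $u_i(A_i\setminus\{c_i\})=u_i(S^*_i)\ge 0\ge u_i(A_{i^*})=u_i(A_{i^*}\setminus\{c_i\})$ since $c_i\notin A_{i^*}$. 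This last step, which is the meta-item analogue of the chore-removal argument in \Cref{prop:EF1+EFable:>=n-chores}, is precisely where the nonnegative own-goods fact is used.

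The genuinely delicate point is \EFability across the two groups: a naive attempt to argue $P$ and $Q$ separately leaves crossing envy cycles unaccounted for, and these can contain positive $P\to Q$ edges (where $u_i(A_{i^*})\le 0$ but $u_i(A_i)<0$), so bounding them directly — or block-diagonalizing an optimal reassignment by local swaps without decreasing welfare — does not go through cleanly. The Separation Lemma is what dissolves this difficulty: by forcing the $P$-agents out of the goods graph entirely it lets me present the full allocation as one honest iterative max-weight matching and reduce \EFability to the uniform per-round inequality. Consequently, I expect the main work (and the technical heart of the proof) to be establishing Separation and the nonnegative own-goods fact from the joint $\val$/tie-break optimality of $\{S^*_i\}$ and $\mu^*$, after which both \EFability and \EFone follow along the lines above.
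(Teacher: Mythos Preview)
Your proposal is correct and follows essentially the same route as the paper: your Separation and nonnegative-own-goods facts are precisely the paper's Observations~1--3, the \EFone case analysis is identical, and your single-iterative-matching presentation of \EFability is just a repackaging of the paper's $Q+W$ decomposition (both reduce to the same per-round cycle inequality via the fact that the $P$-agents are isolated in the goods graph). One point you leave implicit but the paper states explicitly: the leftover elements of $\mathcal{P}\setminus\bigcup_j S^*_j$ are already \goodMinimal by \Cref{lem:matchProprty}, so \Cref{alg:meta-goods} performs no further refinement and the IMWM item set is exactly $\mathcal{P}\setminus\bigcup_j S^*_j$ --- this is what your isolation claim actually needs at the level of the matching rounds, since Separation is only stated for the unrefined meta-items.
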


\begin{proof}
Let $A = (A_1, \dots, A_n)$ denote the allocation described in the statement of the proposition.
Note that $A$ is a complete allocation since, by \Cref{lem:matchProprty}, $\mathcal{P}$ and $\widehat{Z}$ form a partition of the item set~$M$; and moreover, agents in $N \setminus N^d$ are allocated $\widehat{Z}$ and $\{S^*_i\}_{i \in [k]}$, and the remaining meta-items are allocated to~$N^d$.
We first show that the allocation satisfies \EFone.
To do so, we establish a few key observations.

\medskip
\noindent\textit{Observation~1:}
Each agent~$i \in N \setminus N^d$ derives a non-negative utility from every element of~$S^*_i$.

\smallskip
This observation follows from that if there were any element in~$S^*_i$ of negative utility for agent~$i$, removing it would strictly increase the matching value~$\val(\mu^*)$, since agent~$i$'s utility strictly increases while all other agents' utilities remain the same.
This contradicts the value optimality of~$\mu^*$.

\medskip
\noindent\textit{Observation~2:}
Each agent~$i \in N \setminus N^d$ derives a negative utility from every element of~$\mathcal{P} \setminus \bigcup_{r \in [k]} S^*_r$.

\smallskip
This observation holds since if there exists an agent~$i \in N \setminus N^d$ who has a non-negative utility for some element~$h \in \mathcal{P} \setminus \bigcup_{j \in [k]} S^*_j$, then we can modify $S^*_i$ to $S^*_i \cup \{h\}$ while keeping all other $S^*_j$'s unchanged.
This would either strictly increase agent~$i$'s utility while leaving the utilities of all other agents unchanged, thereby increasing the matching value~$\val(\mu^*)$, or increase $\sum_{i \in [k]} |S^*_i|$.
However, this contradicts the value optimality of~$\mu^*$ and our choice of $\{S^*_j\}_{j \in [k]}$.

\medskip
\noindent\textit{Observation~3:}
For each agent~$r \in N$ and any~$i \in [k]$, we have $u_r(A_i) = u_r(\{c_i\} \cup S^*_i)  <0$.

\smallskip
This observation follows directly from \Cref{lem:matchProprty}.

\medskip
\noindent\textbf{Allocation~$A$ is \EFone.}
We first show that agents in $N \setminus N^d$ satisfy \EFone.
By Observation~1, we have that $u_i(A_i \setminus \{c_i\}) = u_i(S^*_i) \geq 0$ for each $i \in N \setminus N^d$.
We will then show that $u_i(A_j) < 0$ for any~$j \in N$.
When $j \in N \setminus N^d$, the statement directly follows from Observation~3.
When~$j \in N^d$, we have $A_j \subseteq \mathcal{P} \setminus \bigcup_{r \in [k]} S^*_r$; by Observation~2, $u_i(A_j) < 0$.
Hence, for every $i \in  N \setminus N^d$ and every~$j \in N$, we obtain $u_i(A_j) \leq 0$.
Thus, for all~$i \in N \setminus N^d$ and~$j \in N$,
$$
u_i(A_i \setminus \{c_i\}) = u_i(S^*_i) \geq 0 \geq u_i(A_j),
$$
establishing \EFone for every agent in $N \setminus N^d$.

We now show that the agents in~$N^d$ satisfy \EFone.
Note that for every subjective good or meta-good in $\mathcal{P} \setminus \bigcup_{j \in [k]} S^*_j$, it is valued non-negatively by at least one agent in~$N^d$.
This follows because each element of~$\mathcal{P}$ is non-negatively valued by at least one agent, and every agent in $N \setminus N^d$ values all elements of $\mathcal{P} \setminus \bigcup_{j \in [k]} S^*_j$ negatively (by Observation~2).
Thus, any remaining element must be non-negatively valued by some agent in~$N^d$.

Consequently, when we restrict attention to the agents in~$N^d$ and meta-goods and subjective goods in $\mathcal{P} \setminus \bigcup_{r \in [k]} S^*_r$, \Cref{alg:meta-goods} is applicable.
By \Cref{prop:EF1+EFable:<n-chores:=0}, the resulting allocation ensures that the agents in~$N^d$ are \EFone among themselves.
Next, consider any pair of agents~$i \in N^d$ and~$j \in N \setminus N^d$.
By construction, \Cref{alg:meta-goods} guarantees that $u_i(A_i) \geq 0$, since the subjective goods or the meta-goods are allocated to an agent only when that agent values them non-negatively.
On the other hand, $u_i (A_j)< 0$ by Observation~3.
Thus, we have that whenever~$i \in N^d$ and~$j \in N \setminus N^d$, agent~$i$ does not envy agent~$j$.
Hence, \EFone holds among agents in~$N^d$.
Thus, the allocation~$A$ satisfies \EFone.

\medskip
\noindent\textbf{Allocation~$A$ is \EFable.}
By \Cref{thm:hs_ef}, it suffices to show~$\mathcal{G}_A$ has no positive-weight cycles.
Consider any directed cycle~$\gamma$ in~$\mathcal{G}_A$.
We decompose allocation~$A$ into two distinct allocation~$Q$ and~$W$ defined as follows:
$$
Q = (Q_1, \dots, Q_n) \coloneqq (\underbrace{A_1, \dots, A_k}_k, \underbrace{\emptyset, \dots, \emptyset}_{n-k})
\qquad\qquad
W = (W_1, \dots, W_n) \coloneqq (\underbrace{\emptyset, \dots, \emptyset}_k, \underbrace{A_{k+1}, \dots, A_n}_{n-k}).
$$

By construction, we have that $u_i(A_j) = u_i(Q_j) + u_i(W_j)$ for each~$i, j \in N$.
It follows that,
\begin{align}
w_{A}(\gamma) = \sum_{(i, j) \in \gamma} w_A(i, j)
&= \sum_{(i, j) \in \gamma} (u_i(A_j) - u_i(A_i)) \notag \\
&= \sum_{(i, j) \in \gamma} (u_i(Q_j) + u_i(W_j) - u_i(Q_i) - u_i(W_i)) \notag \\
&= \sum_{(i, j) \in \gamma} (u_i(Q_j) - u_i(Q_i)) + \sum_{(i, j) \in \gamma} (u_i(W_j) - u_i(W_i)) \notag \\
&= w_Q(\gamma) + w_W(\gamma). \label{eq:A<Q+W}
\end{align}

Thus, to prove that $w_A(\gamma) \leq 0$, it suffices to show that both $w_Q(\gamma) \leq 0$ and $w_W(\gamma) \leq 0$.
We begin to establish $w_Q(\gamma) \leq 0$.
In other words, we will show allocation~$Q$ is \EFable.
Recall that, under our relabeling, $A_i = \{c_i\} \cup S^*_i$ for each~$i \in [k]$.
By \Cref{thm:hs_ef}, the allocation $Q$ being \EFable is equivalent to showing that for every permutation~$\sigma$ over the agents,
\begin{equation}
\label{eq:Q}
\sum_{i = 1}^n u_i(Q_i) \geq \sum_{i = 1}^n u_i(Q_{\sigma(i)}).
\end{equation}
The bundles $\{Q_1, \dots, Q_n\}$ correspond to the nodes on the right-hand side of the bipartite graph $H[N, \widetilde{J}(S^*_1, \dots, S^*_k)]$.
Furthermore, matching~$\mu^*$, which assigns $\{c_i\} \cup S_i^*$ to agent~$i$ for each~$i \in [k]$ and assigns a dummy item (i.e., an empty set) to every agent in $\{k+1, \dots, n\}$, is a maximum-weight perfect matching on $H[N, \widetilde{J}(S^*_1, \dots, S^*_k)]$.
Since every permutation of agents corresponds to a perfect matching on this graph, the maximality of~$\mu^*$ implies the inequality \Cref{eq:Q}.
Hence, allocation~$Q$ is \EFable, and thus $w_Q(\gamma) \leq 0$.

We now prove that allocation~$W$ satisfies $w_W(\gamma) \leq 0$ by showing it is \EFable.
Recall that the allocation~$A_j$ of the agents $j \in N^d = \{ k+1, \dots, n\}$ was computed by running \Cref{alg:meta-goods} with agents~$N^d$ and item set $\mathcal{P} \setminus \bigcup_{j \in [k]} S^*_j$.
\Cref{alg:meta-goods} performs iterative maximum-weight matching on the graph $G[N^d, \mathcal{P} \setminus \bigcup_{j \in [k]} S^*_j]$ as each $h\in \mathcal{P} \setminus \bigcup_{j \in [k]} S^*_j $ is \goodMinimal by \Cref{lem:matchProprty}.\footnote{See \Cref{sec:EFM:<n-chores:=0} for details on how the graph is constructed.}
In short, each agent is only matched to nodes of non-negative utilities.

By Observation~2, every agent $i \in N \setminus N^d$ derives negative utility from each element of $\mathcal{P} \setminus \bigcup_{j \in [k]} S^*_j$.
Consequently, in $G[N, \mathcal{P} \setminus \bigcup_{j \in [k]} S^*_j]$, agents in $N \setminus N^d$ have no incident edges and are thus isolated.
It follows that these agents are never part of any matching.
Since the rest of the graph remains unchanged, iterative maximum-weight matching on $G[N, \mathcal{P} \setminus \bigcup_{j \in [k]} S^*_j]$ is identical to that on $G[N^d, \mathcal{P} \setminus \bigcup_{j \in [k]} S^*_j]$.
In other words, $W = (\emptyset, \dots, \emptyset, A_{k+1}, \dots, A_n)$ is precisely the allocation obtained by running iterative maximum-weight matching on $G[N, \mathcal{P} \setminus \bigcup_{j \in [k]} S^*_j]$.
By \Cref{prop:EF1+EFable:<n-chores:=0}, it follows that  $W$ is \EFable, and thus $w_W(\gamma) \leq 0$.

Finally, combining with \Cref{eq:A<Q+W}, for any cycle~$\gamma$, we have $w_A(\gamma) = w_Q(\gamma) + w_W(\gamma) \leq 0$.
By \Cref{thm:hs_ef}, allocation~$A$ is \EFable, as desired.
\end{proof}

Combining \Cref{prop:EF1+EFable:>=n-chores,prop:EF1+EFable:<n-chores:=0,prop:EF1+EFable:<n-chores:>0}, we establish \Cref{thm:EF1+EFable-existence}.

\section{Discussion}

In this paper, we have studied the general fair division problem that involves indivisible goods and chores as well as a cake, and show the existence of \EFM allocations.
To establish this result, we prove that with indivisible goods and chores, there always exists an allocation that is both \EFone and \EFable.
Our findings uncover an interesting connection between solution concepts of envy-freeness, \EFability, \EFone, and \EFM that have been investigated in closely related but separately developed lines of research.

A potential direction for future work is to consider the more general class of valuations beyond additive.
For instance, with only indivisible items, \citet{BhaskarSrVa21} devised an algorithm that always finds an \EFone allocation for doubly monotone instances (cf.\ \Cref{ft:doubly-monotone}).
Our iterative matching techniques can no longer extend beyond additive valuations.
This raises the following intriguing questions.
Does an \EFM allocation always exist for agents with doubly monotone valuations over indivisible items?
Is \EFone compatible with \EFability for doubly monotone instances?
Note that the question is still open in the setting with only indivisible goods.

\section*{Acknowledgements}

This work was partially supported by the NSF-CSIRO grant on ``Fair Sequential Collective Decision-Making'' and the ARC Laureate Project FL200100204 on ``Trustworthy AI.''

%-----------------------------------------------------------------------------%
%	BIBLIOGRAPHY
%-----------------------------------------------------------------------------%
\bibliographystyle{plainnat}
\bibliography{bibliography}
\end{document}